\newtheorem{theorem}{Theorem}
\newtheorem{lemma}[theorem]{Lemma}
\DeclareMathOperator*{\argmaxA}{\arg\max} 
\def\BibTeX{{\rm B\kern-.05em{\sc i\kern-.025em b}\kern-.08em
    T\kern-.1667em\lower.7ex\hbox{E}\kern-.125emX}}
\begin{document}

\title{Empirical Risk-aware Machine Learning on Trojan-Horse Detection for Trusted Quantum Key Distribution Networks}
\author{\IEEEauthorblockN{Hong-fu Chou,~\IEEEmembership{Member,~IEEE}, Thang X. Vu,~\IEEEmembership{Senior Member,~IEEE}, Ilora Maity,~\IEEEmembership{Member,~IEEE}, Luis M. Garces-Socarras,~\IEEEmembership{ Member,~IEEE}, Jorge L. Gonzalez-Rios,~\IEEEmembership{Member,~IEEE}, Juan Carlos Merlano-Duncan,~\IEEEmembership{Senior Member,~IEEE}, \dag Sean Longyu Ma,~\IEEEmembership{Member,~IEEE}, Symeon Chatzinotas,~\IEEEmembership{Fellow,~IEEE} and Björn Ottersten,~\IEEEmembership{Fellow,~IEEE} \thanks{This work unrevealed any technical details of IDQ QKD devices under the NDA agreement and is conducted under the project LUQCIA and LUX4QCI funded by the European Union Next Generation EU, in collaboration with the Department of Media, Connectivity, and Digital Policy (SMC), Luxembourg. In addition, this paper is based upon work from COST Action 6G-PHYSEC (CA22168), supported by COST  (European Cooperation in Science and Technology).}\thanks{Hong-fu Chou, Thang X. Vu, Ilora Maity, Luis M. Garces-Socarras, Jorge L. Gonzalez-Rios, Symeon Chatzinotas, and Björn Ottersten are with the Interdisciplinary Centre for Security, Reliability and Trust (SnT), University of Luxembourg, Luxembourg (E-mail of corresponding author: hungpu.chou@uni.lu).}\thanks{Sean Longyu Ma is with \dag School of Computer Science, The University of Auckland, New Zealand}}}
\maketitle
\begin{abstract}
Quantum key distribution (QKD) is a cryptographic technique that leverages principles of quantum mechanics to offer extremely high levels of data security during transmission. It is well acknowledged for its capacity to accomplish provable security. However, the existence of a gap between theoretical concepts and practical implementation has raised concerns about the trustworthiness of QKD networks. In order to mitigate this disparity, we propose the implementation of risk-aware machine learning techniques that present risk analysis for Trojan-horse attacks over the time-variant quantum channel. The trust condition presented in this study aims to evaluate the offline assessment of safety assurance by comparing the risk levels between the recommended safety borderline. This assessment is based on the risk analysis conducted. Furthermore, the proposed trustworthy QKD scenario demonstrates its numerical findings with the assistance of a state-of-the-art point-to-point QKD device, which operates over optical quantum channels spanning distances of 1m, 1km, and 30km. Based on the results from the experimental evaluation of a 30km optical connection, it can be concluded that the QKD device provided prior information to the proposed learner during the non-existence of Eve's attack. According to the optimal classifier, the defensive gate offered by our learner possesses the capability to identify any latent Eve attacks, hence effectively mitigating the risk of potential vulnerabilities. The Eve detection probability is provably bound for our trustworthy QKD scenario.
\end{abstract}
\begin{IEEEkeywords}
Quantum key distribution, Risk-aware machine learning, QKD networks, Gaussian mixture models, Risk measurement, Risk analysis, Safety assurance, Trust computing, Goodness of fit 
\end{IEEEkeywords}

\section{Introduction}
Safety is a fundamental requirement in Cyber-Physical Systems (CPSs), wherein safety can be defined as the absence of circumstances that may lead to fatality, harm, professional ailments, or destruction of equipment or property. It can also be understood as the absence of undesirable hazards that could result in injuries to people or harm to human health, whether that is either directly or indirectly through harm to assets or to the natural world. Numerous research studies have addressed the safety problems and available methodologies pertaining to CPSs. Zio \cite{Zio2016} undertook a comprehensive analysis of the attributes associated with complexity and the various methodologies used in risk assessment pertaining to critical infrastructure needs, without explicitly establishing a direct correlation to CPSs. The actions outlined in \cite{NAIR2014} serve as a fundamental foundation for categorizing the current advancements and emerging discoveries in the field of safety for CPSs. This research does not include the execution of risk control methods, since this aspect is contingent upon the unique characteristics of each application area. Safety assurance, discussed in \cite{Bolbot2019} is commonly described as a comprehensive and methodical process that encompasses all the deliberate and organized steps required to instill sufficient trust that an item, offering, organization, or functional system attains a level of safety that is deemed desirable or bearable. The primary objective of the safety assurance is to provide evidence that the systems are secure for their intended purpose. Additionally, it could potentially be employed to show that sufficient measures have been taken to mitigate risks to a permitted degree. Various safety assurance standards, such as IEC 61508 \cite{bell2006introduction} and ARP 4761 \cite{leveson2014comparison}, use distinct methodologies in the realm of risk management. ARP 4761 places significant emphasis on the discipline of system engineering and adopts a top-down evaluation methodology. In contrast, IEC 61508 focuses primarily on the creation and verification of safety requirements. In contrast, the focus of IEC 62508 revolves around the interface between humans and machines. Furthermore, the Technology Readiness Level (TRL) framework has been developed in accordance with a waterfall system engineering life cycle, providing assistance in the design of novel systems. Any observed disparities may be attributed to divergent preferences and approaches within safety and risk management protocols.

From the discussion in \cite{feng2017trusted}, the social construct of trust primarily focuses on the attributes associated with being trustworthy, specifically in a social context. Trustworthiness is determined by individuals collectively agreeing that the entity being trusted is both morally upright, and will consistently make ethical decisions. Trust among people with regard to a Trusted Platform can be seen as a manifestation of confidence in behavioral trust, as it pertains to the guarantee related to the execution and functioning of such a Trusted Platform.  Platforms use social trust to instil confidence in the processes responsible for collecting and presenting behavioral information. This information, therefore, facilitates the determination of the trustworthiness of a platform. The goal and definition of trusted computing presented in \cite{feng2017trusted} is discussed based on the ISO/IEC 15408 standard: a trusted component, operation, or process is characterized by its ability to exhibit consistent behavior across various operating conditions and its strong resilience against corruption by software, malware, and a specified degree of physical interference. The Trusted Platform employs specialized mechanisms that actively gather evidence of its activity and then provide this proof. This knowledge facilitates the assessment of a platform's trustworthiness. 

The deployment of quantum key distribution (QKD) networks in the future and forthcoming scenarios has garnered significant interest due to their potential to provide ultra-secure communication services. The no-cloning theorem in quantum mechanics guarantees the security of transmitting credential keys via a quantum channel. The analysis of the current state of critical components in quantum networks enables the capacity to link quantum devices across long distances, resulting in significant improvements in communication, network efficiency, and security\cite{Parny_2023}. The authors in \cite{Trinh2018} demonstrate the potential for constructing global quantum networks by transmitting quantum states with essential information using free-space optical (FSO) channels. In order to attain a state of absolute security, QKD is used as a protocol that, in principle, ensures the confidentiality of information sent between two distant nodes by establishing secure keys. This approach has been extensively studied and documented in the literature, as seen in \cite{QKD1} and \cite{QKD2}. QKD has emerged as an extensively investigated quantum communication system \cite{Lajos}. It has been successfully implemented in several communication channels, including both fiber-optic and FSO channels. Long-distance FSO quantum communications have been successfully implemented across very long distances, as shown in \cite{Erven_2008}. Additionally, various experiments have been conducted, such as those presented in \cite{Scheidl2009} and \cite{Fedrizzi_2009}. In the following discourse, we present our methodology, supported by ID Quantique (IDQ), a business based in Switzerland that offers cutting-edge industrial solutions for QKD networks \cite{quantique2001quantis}. This technique entails the establishment of a QKD infrastructure using the BB84 communication protocol \cite{chong2010quantum} or a similar invention. The QKD device is usually implemented using the BB84 protocol, Eve is unable to access the sent key in QKD communication. However, ensuring the precise alignment of practical implementations of QKD systems with their corresponding theoretical requirements is a challenging task. The existence of discrepancies between theory and practice has the potential to create vulnerabilities and undermine the integrity of security measures\cite{jain2016attacks}. Trojan-horse attacks have been recognized as a specific risk, primarily targeting the Bob subsystem. The nature of the attack \cite{Jain2015} is that Eve employs a method of attack against Bob by transmitting luminous Trojan-horse pulses in order to ascertain the specific bases chosen by Alice during the execution of the QKD protocol. The transmission of this information is facilitated by the back-reflected pulses emitted from Alice. In adherence to a general principle, it is imperative for Eve to minimize any interference with the authentic quantum signals transmitted from Alice to Bob, as her primary objective is just to ascertain the foundation settings. In this hypothetical scenario \cite{Jain2015}, when Eve successfully achieves correlations over 48\% using the key obtained through identical forward error code, Alice and Bob are unable to detect the parameters being attacked. Additionally, the QBER of 5\% is significantly lower than the QBER abortion rate of 11\%. It may be concluded from this extreme case that the security of the QKD system has been compromised and does not elicit any significant concern, as the percentage to be deducted throughout the process of privacy amplification is 47.8\%, a value lower than the extent of Eve's understanding. While Eve continues to interfere, the escalating QBER to the abortion rate leads to the complete loss of the key sequence, while Alice and Bob are compelled to engage in re-transmission. In addition to the interference caused by Eve, the presence of a time-invariant quantum channel can also be observed, resulting in a decrease in the key rate reported in the experimental outcomes of the IDQ QKD device when used in long-distance 30km QKD networks. The risks arising from Eve's interference residing in the characteristics of the time-variant quantum channel in QKD networks are taken into consideration. This situation poses a certain level of risk that is contingent upon the QKD key consumption while eavesdropping occurs. As shown in Fig. \ref{fig_img}, the risk can be shown by figuring out a loss function that takes into account what is known about the software-defined network (SDN) controller's assigned traffic, eavesdropping, changes in quantum channel parameters that cause time-varying effects, and what is known about the past from the empirical QBER data. The empirical data are obtained by IDQ DV-QKD Cerberis XGR devices via the optical quantum channel in our single and multiple campus dark fiber experiments. 

The primary objective of Trojan detection \cite{Naveenkumar2021} is to enhance the overall reliability of the system and ensure the integrity of circuits. By emulating the approach taken by Trojan detection networks \cite{Zhang2021}, the deep learning networks encapsulate the unidentified trigger shape and deviations in decision boundaries introduced by backdoors through the acquisition of features derived from adversarial patterns and their characteristics. The application of adversarial perturbations in order to get its imprint is an approach. The introduction of a backdoor alters the decision limits of a network, which are effectively communicated via adversarial perturbations. Inspired by the aforementioned approach, our contributions are summarised as follows:
\begin{enumerate}
    \item In order to establish trustworthy assurance, we initiate risk analysis to provide a trust paradigm that mitigates the gap between theory and practice by taking into account the potential eavesdropping interference on top of the loopholes of practical concern. 
    \item The trust condition not only functions as a borderline for monitoring secure key transmission but also quantifies their perception of trustworthiness, allowing them to be aware of any risks at specific levels, including scenarios involving Trojan-horse attacks where the implementation of the BB84 protocol may include the aforementioned security exceptions.
    \item The proposed approach involves utilizing the category-based Gaussian Mixture Model (GMM) in conjunction with the Kolmogorov-Smirnov (KS) test for assessing goodness-of-fit. This method aims to estimate the posterior probability distribution of the empirical QBER dataset in order to evaluate the potential risks associated with practical QKD systems.    
    \item The defensive mechanism under consideration is based on the principles of the Bayes classifier, with the objective of detecting and mitigating possible attacks by Eve's threat. The Bayes classifier is employed as the optimal classifier in our proposed learner to offer empirical posterior information regarding the probability of eavesdropping occurrences, including Trojan-horse attacks. The Eve detection of the proposed QKD scenario is provably bound by applying Markov inequality. The numerical result is accomplished in an effort to be aware of this risk and deal with the exceptional loophole as discussed in the study presented in \cite{Jain2015}.
    \item The deployment of trustworthy QKD networks enables awareness of risk and validation of the trust condition. This paradigm offers a catalyst for future research in domains such as trusted risk-aware routing and raw key queuing mechanisms inside QKD networks, with the aim of achieving risk prediction through the utilization of online reinforcement learning.
\begin{figure*}[htbp]
\centering
\captionsetup{justification=centering,margin=2cm}
\centerline{\includegraphics[width=0.8\textwidth]{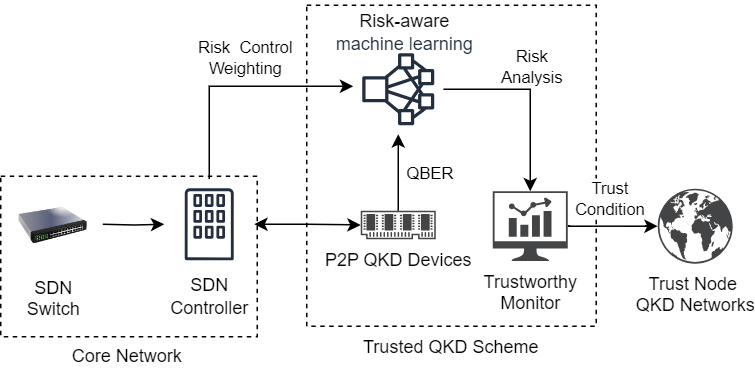}}
\caption{An illustration of the trusted QKD scenario}
\label{fig_img}
\end{figure*}    
\end{enumerate}
The subsequent sections of this paper are structured in the following manner: Section II summarises the related works and provides a comparison of defense mechanisms against Trojan horse attacks.  Section III provides a study of the potential risks associated with the implementation of a viable QKD network. Section IV introduces the proposed learner that can learn the QBER distribution of the QKD device. Section V provides a comprehensive analysis and synthesis of the findings. In Section VI, the numerical results for the trustworthy QKD scenario are presented in order to apply the aforementioned risk analysis and learning procedure. Section VII concludes with a summary of the key insights gained from the study.
\section{Review of related works}
\begin{table*}[h]
\centering
\caption{Summary of Trojan Horse Attacks and Defense Mechanisms in QKD Systems}
\begin{tabular}{|p{4cm}|c|c|c|c|c|c|c|c|c|c|c|}
\hline
\textbf{Theme} & \textbf{\cite{huang2020survey}} & \textbf{\cite{almeida2022ransomware}} & \textbf{\cite{yang2020dynamic}} & \textbf{\cite{sajeed2017invisible}} & \textbf{\cite{yang2015trojan}} & \textbf{\cite{Vinay2018}} & \textbf{\cite{Borisova2020RiskAO}} & \textbf{\cite{Pan2020}} & \textbf{\cite{Navarrete_2022}} & Proposed\\ \hline
Trojan Horse Attack (THA) overview & \checkmark & \checkmark & \checkmark & \checkmark & \checkmark &  &  &  &  &\checkmark\\ \hline
Side-channel attacks (SCA) & \checkmark & \checkmark &  & \checkmark & \checkmark &  &  &  &  &\checkmark\\ \hline
Hardware Trojan incorporation &  & \checkmark & \checkmark &  &  &  &  &  &  &\checkmark\\ \hline
Attenuation and phase modulation issues &  &  &  & \checkmark & \checkmark &  &  &  &  &\\ \hline
Separable coherent state (Gaussian) attacks &  &  &  &  &  & \checkmark &  &  & & \\ \hline
Risk analysis of fiber-optic components &  &  &  &  &  &  & \checkmark &  &  &\checkmark\\ \hline
Spectrum transmission and technical safeguards &  &  &  &  &  &  & \checkmark &  & & \\ \hline
Isolation requirements for secure QKD &  &  &  &  &  &  & \checkmark &  & & \\ \hline
Excess noise analysis in CV-QKD &  &  &  &  &  &  &  & \checkmark &  &\\ \hline
Decoy state vulnerability &  &  &  &  &  &  &  &   &\checkmark &\\ \hline
Finite-key security proofs &  &  &  &  &  &  &  &   &\checkmark &\\ \hline
QBER and vulnerability assessment &  &  &  &  &  &  &  &  & \checkmark &\checkmark\\ \hline
Wavelength and pulse power input effects &  &  &  &  &  &  &  & \checkmark &  &\\ \hline
\end{tabular}
\label{table: checkbox}
\end{table*}
The Discrete-Variable (DV)-QKD protocols, such as BB84, are proven to be resistant to eavesdropping by intercepting the flying qubits and conducting any quantum manipulation on them. The Trojan horse attacks are classified as side-channel attacks (SCA) \cite{huang2020survey}, in which Eve injects her own state into Alice's device and measures the resulting state to infer the key. As a result of the incorporation of hardware trojan horses \cite{almeida2022ransomware} and \cite{ yang2020dynamic}, SCA in QKD systems must be given substantial consideration and cannot be regarded as an inconsequential factor. An SCA, even if it is rudimentary, can significantly compromise the security of a protocol if it is not well-defended. Based on research conducted by \cite{sajeed2017invisible} and \cite{ yang2015trojan}, it has been discovered that the Trojan-horse assault, although experiencing enhanced attenuation and inadequate phase modulation at about 1924nm, has a significant likelihood of eluding detection. The extremely low afterpulsing encountered by Bob's detectors is the cause of this. The majority of the components required to execute this assault are readily available off-the-shelf items. Therefore, in order to avert such assaults, it is critical that operational QKD systems incorporate effective countermeasures. The authors in \cite{Vinay2018} demonstrate that the separable coherent state is the most effective for Eve among a group of multi-mode Gaussian attack states, even when there is thermal noise present. Additionally, the authors establish a limit on the pace at which secret keys may be generated when Eve is allowed to employ any separable state. The analysis evaluates the efficiency of SCA  protection by analyzing the established requirements and measurement results of the researched passive insulating components. The study in \cite{Borisova2020RiskAO} showcases the recorded spectrum transmission spectra of several fiber-optic components.
The graphs provided are applicable for selecting components for quantum key distribution systems, specifically for selecting safeguards against attacks on technical implementation. Additionally, it demonstrates the calculation of the necessary isolation for the QKD system, offering efficient defense against Trojan-horse attacks by leveraging the eavesdropper's utmost technological skills. In order to provide secure QKD, it is necessary for the legitimate parties to have a high level of isolation in the most challenging conditions, while also providing the eavesdropper with the most favorable conditions. This isolation should surpass 150 dB over the whole spectral range that is permitted for transmission by optical fiber. In the study of \cite{Pan2020}, the author proposed a practical method to estimate and correct excess noise in continuous-variable quantum key distribution (CV-QKD) systems. This is achieved by inserting pulses of different wavelengths, taking advantage of the wavelength-dependent property of beam splitters, and considering the existence of non-zero reflection coefficients in the real component. The deviation of shot noise can be accurately estimated using this approach. In conclusion, the authors determined the security bounds of the system by measuring the additional noise generated by Trojan horse attacks. This provides a theoretical basis for the secure transmission of secret keys in the system. For the purpose of quantifying the threshold of extra noise, the authors chose a transmission distance of 50 kilometers as an example. Analysis and simulation experiments indicate that if Trojan horse attacks employ the identical wavelength as the original, the level of crosstalk must be below 0.3.
When employing Trojan horse attacks with a distinct frequency from the original, the magnitude of the attack pulse's power input ought to stay below 0.7 mW. 
Furthermore, the authors in \cite{Navarrete_2022} calculate the limits of security for QKD systems that use decoy states and are vulnerable to Trojan-horse attacks. These new boundaries are far better than earlier estimates. A generic finite-key security proof has been developed for decoy-state-based QKD in the context of probable information leakage from Alice's transmitter. In order to accomplish this, they have used a Cauchy-Schwarz restriction to include the information leakage from the bit/basis and intensity encoding configurations in the security analysis. This restriction necessitates users to limit a singular parameter that encompasses all the flaws, and we have employed innovative concentration constraints to address the consequences of finite-key. Practically, this single parameter may be directly correlated with the level of isolation of Alice's transmitter.
Finally, we compare the defense mechanism employed against the Trojan horse attacks with the QKD system in Table  \ref{table: checkbox}. Our proposed approach gathers QBER data from a QKD device to implement a data-driven technique and precisely assess the vulnerability to Trojan horse attacks. The risk analysis can analytically enhance the assurance of the trusted condition in our trust-based QKD scenario.

\section{Empirical risk analysis for QKD networks}
\begin{figure*}[htbp]
\centering
\centerline{\includegraphics[width=\textwidth]{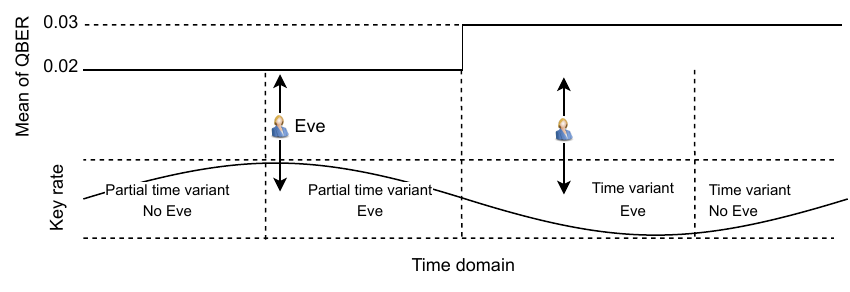}}
\caption{QKD device and Alice subsystem affected by the time-variant quantum channel and Eve's attacks}
\label{fig1_1}
\end{figure*}
The assessment of risk control weighting is primarily determined by the volume of network traffic and the key consumption of QKD networks. It is worth noting that the key consumption rate on demand can be expressed as $\kappa_{ij}=T_{D}/K_{C}$, where $T_{D}$ denotes the number of key demands from the traffic and $K_{C}$ denotes the keys consumed by the QKD node. In \cite{Sharma2021}, the QKD keys are generated by QKD devices and stored in key managers or key pools. Despite a failed point-to-point QKD transmission when the abortion rate is 11\% for our QKD devices, as well as the QBER parameter estimation in \cite{cao2022evolution}, the risk of QKD transmission can be influenced by the ambiguity between prospective Trojan-horse attacks and the presence of time-variant quantum channels, as illustrated in Fig. \ref{fig1_1}. In general, the empirical risk of a QKD network can be calculated by the quantum channel attribute, which is the unknown distribution $\textbf{Q}$ of the QBER probability density function (pdf) and key consumption of the QKD network. We employ the risk loss function $\gamma(\cdot)$ proposed in \cite{Devroye1996} to investigate the risk of QKD networks. 
\begin{itemize}
    \item Let $H(\kappa_{ij})$ denote the pdf of risk control weighting, where $\kappa_{ij}$ represents the key consumption rate on demand during time interval $i$ residing in the codeword of measure QBER $\epsilon_j$. 
    \item The risk $R_{L}$ of the QKD network can be defined as $R_{L}=\int_{\mathcal{W}}^{}H(\kappa_i)\gamma(\theta ,\epsilon  )d L(\epsilon)$ for any quantum channel length $L$ and the space $\mathcal{W}$ with distribution  $\textbf{Q}$. 
    \item The estimated $\widehat{\textbf{Q}}$ is specifically constructed to closely align with $\textbf{Q}$ in order to achieve the objective of minimizing the difference between $\epsilon_j$ and estimated $\widehat{\epsilon_j}$.
\end{itemize}
Due to the complexity of performing this integration, we propose the assessment of the risk using data-driven methodologies, that is empirical computation. The problem formulation to analyze the empirical risk of the QKD network is defined as follows:
\begin{equation}
    \widehat{R_{\varepsilon}}= \frac{1}{N}\sum_{j=1}^{N}\frac{1}{M_{ij}}\sum_{i=1}^{M_{ij}}H(\kappa_{ij})\gamma(\theta ,\epsilon_j  )
\end{equation}
\begin{equation}
    = \frac{1}{N}\sum_{j=1}^{N}H_{M_j}\gamma(\theta ,\epsilon_j  ),
\end{equation}
where
\begin{align}\label{gamma}
\gamma(\theta ,\epsilon_j  ) = 
\delta_{Eve}(\widehat{\textbf{Q}})P\{ \widehat{\textbf{Q}},\textbf{V} \}+ \delta_{var}(\widehat{\textbf{Q}})P\{ \widehat{\textbf{Q}},\textbf{R} \}\notag\\-\delta_{Eve}(\widehat{\textbf{Q}}) \delta_{var}(\widehat{\textbf{Q}})P\{ \widehat{\textbf{Q}},\textbf{V}\},
\end{align}
and the posterior probability of Eavesdropping event and time variance is denoted as $\delta_{Eve}(\widehat{\textbf{Q}})$ and $\delta_{var}(\widehat{\textbf{Q}})$ respectively. The eavesdropping occurrence is denoted as $\textbf{V}$. Furthermore, $\epsilon_j$ denotes the QBER value in the space $\mathcal{W}$ with distribution  $\textbf{Q}$, i.e., $\epsilon\in\textbf{Q}$, and $\forall\theta\in\Omega$ denotes the parameter of the learner in the parameter space $\Omega$. $N$ is the number of experiment samples, and $M_{ij}$ is the number of the key consumption rate changing for the time frame of QBER $\epsilon_j$. The average of $H(\kappa_{ij}),~i=1...M_{ij}$ is denoted as $H_{M_j}$.   

With regard to $\delta_{Eve}(\widehat{\textbf{Q}})$, $\rho$ is the maximum tolerant bit error rate of success decoded before any privacy amplification. The event of Eavesdropping is denoted as $\Delta$. We have
\begin{equation}
\delta_{Eve}(\widehat{\textbf{Q}}) = P\{ \Delta \mid \widehat{\textbf{Q}},\textbf{V} \} \notag   
\end{equation}
\begin{align}
&= P(\epsilon_j=\widehat{\epsilon_j}, \epsilon_j > \rho \mid \widehat{\textbf{Q}}=\lambda(\widehat{\theta}),\textbf{V}) \\
&= P(\epsilon_j > \rho \mid \epsilon_j=\widehat{\epsilon_j},\widehat{\textbf{Q}}=\lambda(\widehat{\theta}),\textbf{V}) P(\epsilon_j=\widehat{\epsilon_j} \mid \widehat{\textbf{Q}}=\lambda(\widehat{\theta}),\textbf{V}) \\
&= P(\widehat{\epsilon_j} > \rho \mid \epsilon_j=\widehat{\epsilon_j},\widehat{\textbf{Q}}=\lambda(\widehat{\theta}),\textbf{V}) P(\epsilon_j=\widehat{\epsilon_j} \mid \widehat{\textbf{Q}}=\lambda(\widehat{\theta}),\textbf{V}),
\end{align}
where $\lambda (\widehat{\theta})$ denotes the learner to estimate the unknown distribution $\textbf{Q}$ of QBER using parameter $\widehat{\theta}$ and $\widehat{\epsilon_j}$ is generated from distribution $\widehat{\textbf{Q}}$, such that  $\widehat{\epsilon_j}\in\widehat{\textbf{Q}}$.  

In addition, quantum communications encompass the process of transmitting quantum states across a quantum channel \cite{Lajos}. The transmission of pre-determined quantum states via a quantum channel, such as an optical fiber or an FSO channel, from a transmitting party represented by Alice to a receiving party represented by Bob involves the propagation of an optical wave in FSO communication. This communication method occurs in an unobstructed environment, which is susceptible to various disturbances leading to time-varying errors. Disturbances such as turbulence, absorption, and scattering contribute to the attenuation of the wave. The aforementioned disruptions exert an influence on the electromagnetic properties, morphology, and orientation of the beam, hence affecting the overall efficacy of the optical connection. The distance of the FSO link is subject to the impact of unpredictable weather phenomena such as haze, rain, and fog. The authors of \cite{ANBARASI2017} examine several methodologies for constructing mathematical models of satellite (classical) channels. Indeed, quantum channels are subject to temporal variations with the inclusion of $T_1$ and $T_2$ fluctuations. Recent experimental investigations \cite{Etxezar2021} have demonstrated that the relaxation time $T_1$ and the dephasing time $T_2$ for superconducting qubits exhibit considerable temporal variations in superconducting quantum computers. This research examines the temporal variation of a quantum channel by utilizing an experimental setup with an optical fiber setup spanning a distance of 30km. The experiment is conducted using cutting-edge IDQ QKD devices \cite{huang2018quantum}. The possible risk of decreasing the raw key rate due to the time-variant impact may include susceptibility to Trojan horse attacks. The event of the quantum time variations is denoted as $\Lambda$. We derive the posterior probability $\delta_{var}(\widehat{\textbf{Q}})$ and $\textbf{R}$ denote the occurrence of time variation as follows:
\begin{align}
\delta_{var}(\widehat{\textbf{Q}})= P\{ 
\Lambda \mid\widehat{\textbf{Q}},\textbf{R}\} 
= P\{\Lambda \mid\widehat{\textbf{Q}}=\lambda(\widehat{\theta}),\textbf{R}\}~~~~~\\
= P(\epsilon_j=\widehat{\epsilon_j}, \epsilon_j > \rho \mid\widehat{\textbf{Q}}=\lambda(\widehat{\theta}),\textbf{R})~~~~~~~~~\\
=P( \widehat{\epsilon_j} > \rho \mid \epsilon_j=\widehat{\epsilon_j},\widehat{\textbf{Q}}=\lambda(\widehat{\theta}))P(\epsilon_j=\widehat{\epsilon_j}\mid\widehat{\textbf{Q}}=\lambda(\widehat{\theta}),\textbf{R}).
\end{align}
Therefore, the risk loss function can be determined by considering the posterior probability of an Eavesdropping event over the time-variant channel and is defined as follows:
\begin{align}
\gamma(\theta ,\epsilon_j)
=\delta_{Eve}(\widehat{\textbf{Q}})P(\widehat{\textbf{\textbf{Q}}}=\lambda(\widehat{\theta}),\textbf{V})+\delta_{var}(\widehat{\textbf{Q}})P(\widehat{\textbf{\textbf{Q}}}=\lambda(\widehat{\theta}),\textbf{R})\notag\\ -\delta_{Eve}(\widehat{\textbf{Q}})\delta_{var}(\widehat{\textbf{Q}})P(\widehat{\textbf{\textbf{Q}}}=\lambda(\widehat{\theta}),\textbf{V})\\
=\left [\delta_{Eve}(\widehat{\textbf{Q}})P(\textbf{V})+\delta_{var}(\widehat{\textbf{Q}})P(\textbf{R})-\delta_{Eve}(\widehat{\textbf{Q}})\delta_{var}(\widehat{\textbf{Q}})P(\textbf{V})\right ]\notag\\ \times P(\widehat{\textbf{\textbf{Q}}}=\lambda(\widehat{\theta})),
\end{align}
where $\lambda(\widehat{\theta})$ can be chosen as $GMM_{\widehat{\theta}}$  from our proposed approach, and the calculation of $P(\textbf{R})$ will be presented later in Section V. As shown in \cite{achab2020ranking}, the Empirical Risk Minimization (ERM) has an optimal parameter $\theta_{opt}$ using the best learner $\lambda_{opt}$ such that $R_{\varepsilon}(\lambda(\widehat{\theta}))>R_{\varepsilon}(\lambda_{opt}(\theta_{opt}))$. However, the presence of $\lambda_{opt}(\theta_{opt})$ is not always guaranteed, and the performance of $R_{\varepsilon}(\lambda(\widehat{\theta}))$ can be bound, as presented in \cite{Boucheron2010}.

In general, we further consider the case that the eavesdropping occurrence is unknown to the QKD networks. Following the example given in \cite{achab2020ranking}, the Bayes classifier $T_{Bayes}$ can be utilized by the empirical learner to detect instances of latent eavesdropping in QKD networks as follows:
\begin{equation}
    T_{Bayes}= \left\{\begin{matrix}
1,\eta(\epsilon ) > \alpha
\\ 
0,\eta(\epsilon ) < \alpha,
\end{matrix}\right.
\end{equation}
where $\alpha $ denotes the defense gate employed by Alice in order to quantify Eve's detection sensitivity and $\eta(\epsilon )$ denotes the posterior probability based on the proposed empirical learner, such as
\begin{align}
\eta(\epsilon)&=P\{\Delta, \epsilon=\widehat{\epsilon}\mid\widehat{\textbf{Q}}) \\
&=P(\Delta \mid \epsilon=\widehat{\epsilon},\widehat{\textbf{Q}})P(\epsilon=\widehat{\epsilon}\mid\widehat{\textbf{Q}})
\end{align}
\begin{align}
=P(\epsilon=\widehat{\epsilon}\mid\widehat{\textbf{Q}})\int_{\epsilon_{min}}^{1}\int_{\epsilon_{e}}^{1}f_{\widehat{\epsilon}} (\widehat{\epsilon}\mid\widehat{\textbf{Q}}=\lambda(\widehat{\theta}),\epsilon=\widehat{\epsilon},\epsilon_{e})\notag \\ \times f_{\epsilon_{e}}(\epsilon_{e})d\widehat{\epsilon}d\epsilon_{e},
\end{align}
where $f_{\widehat{\epsilon} }$ can be obtained by our proposed approach as GMM. The term $\epsilon_{e}$ is a random variable representing the minimum QBER incurred by Eve's interference, and the $\epsilon_{min}$ denotes the minimum QBER that can occur in the event of eavesdropping. The term 
$f_{\epsilon_{e}}(\epsilon_{e})$ can be assumed to have a uniform distribution of $[\epsilon_{min},\epsilon_{max}]$. In an extreme case of Trojan-horse attacks, the value of $\epsilon_{min}$ is 0.05 \cite{Jain2015}.
\begin{equation}
   \eta(\epsilon) =P(\epsilon=\widehat{\epsilon}\mid\widehat{\textbf{Q}})\int_{\epsilon_{min}}^{1}f_{\widehat{\epsilon}} (\widehat{\epsilon}\mid\widehat{\textbf{Q}}=\lambda(\widehat{\theta}),\epsilon=\widehat{\epsilon})d\widehat{\epsilon}. 
\end{equation}
Therefore, the optimal classification rule, as the Bayes classifier $T_{Bayes}$ is able to identify the difference between eavesdropping and time variation,
\begin{align}
P(\textbf{V})&=P(T_{Bayes}=1)\\
&=P(\eta(\epsilon ) > \alpha)\\
&=\frac{1}{N}\sum_{j=1}^{N} T_{Bayes}\{\eta(\epsilon_j ) > \alpha\}.
\end{align}

The value $\alpha$ will be discussed in Section V.
Substitute into the risk loss function as follows:
\begin{align}
\gamma(\widehat{\theta}, \epsilon_j) &= \left[ \delta_{Eve}(\widehat{\textbf{Q}})P(T_{Bayes}=1) 
+ \delta_{var}(\widehat{\textbf{Q}})P(\textbf{R}) \right. \notag \\
&\quad \left. - \delta_{Eve}(\widehat{\textbf{Q}})\delta_{var}(\widehat{\textbf{Q}})P(T_{Bayes}=1) \right] 
P(\widehat{\textbf{Q}}=\lambda(\widehat{\theta}))  \\
&= \left[ \delta_{Eve}(\widehat{\textbf{Q}})P(\eta(\epsilon) > \alpha) 
+ \delta_{var}(\widehat{\textbf{Q}})P(\textbf{R}) \right. \notag \\
&\quad \left. - \delta_{Eve}(\widehat{\textbf{Q}})\delta_{var}(\widehat{\textbf{Q}})P(\eta(\epsilon) > \alpha) \right] 
P(\widehat{\textbf{Q}}=\lambda(\widehat{\theta})).
\end{align}
In general, this theoretical analysis of risk evaluation is based on the accurate estimation of $\textbf{Q}$ such that $P(\epsilon=\widehat{\epsilon}\mid\widehat{\textbf{Q}})\approx1$. In order to investigate the merit of the proposed empirical risk minimization for QBER estimation, we propose an empirical learner using the generative model, not only using KS test results to fit the model but also adapting to the variant of the unsupervised dataset to guarantee a perfect model fitting by empirical data in the identical spirit of \cite{achab2020ranking}. The idea behind \cite{achab2020ranking} is to amend the difference in the empirical model between the training data and the test data by weighting risk. Our proposed learning algorithm adapts the training data to the test data by extending the categories of the generative model GMMs.
Therefore, the proposed risk measurement $\widehat{R_{\epsilon}}(\widehat{\theta})$ is defined as follows:
\begin{equation}
      \widehat{R_{\epsilon}}(\widehat{\theta}) = \frac{1}{N}\sum_{j=1}^{N}  H_{M_j}\gamma(\widehat{\theta} ,\epsilon_j ). 
\end{equation}
Since the proposed QBER estimation is able to provide an accurate estimation, apart from the approach presented in \cite{achab2020ranking}, the risk control weighting of $H^r_{M_j}$  can be given to suppress the function $\gamma$ in order to adapt the risk loss function. 
\begin{equation}
    H^r_{M_j}= 1-\gamma(\widehat{\theta} ,\epsilon_j ),~i=1...M_{ij}.
\end{equation}
Consequently, the proposed empirical risk reference for the QKD network is provided based on our proposed learner $\widehat{\textbf{Q}}$ as follows:
\begin{equation}
   \widehat{R_{ref}}(\widehat{\theta}) = \frac{1}{N}\sum_{j=1}^{N} (1-\gamma(\widehat{\theta} ,\epsilon_j ))\gamma(\widehat{\theta} ,\epsilon_j ).
\end{equation}
Therefore, the risk reduction rate $\beta_{\varepsilon}$  on each measured QBER can be derived and compared with the equal risk weighting $H(\kappa_{ij})=1$ as follows:
\begin{equation}
    \beta_{\varepsilon} = \gamma(\widehat{\theta} ,\epsilon_j ) \times100\%.
\end{equation}
The maximum risk reference can be a value of 0.25, while $\gamma(\widehat{\theta} ,\epsilon_j )=0.5$ and the minimum risk reduction rate is equal to 50\%, where $0<\gamma(\widehat{\theta} ,\epsilon_j )<1$.

Finally, the aforementioned empirical risk measurement of  $\widehat{R_{\varepsilon}}$ can be referred to as a risk reference $\widehat{R_{ref}}(\widehat{\theta})$ for all of the QKD vendors that provide customers with hourly-updated risk-aware monitoring based on the following conditions. It is worth noting that the difference between $\widehat{R_{\epsilon}} (\widehat{\theta})$ and $\widehat{R_{ref}}(\widehat{\theta})$ is that $H(\cdot)$ of $\widehat{R_{\epsilon}} (\widehat{\theta})$ applies the risk control weighting and $H(\cdot)$ of $\widehat{R_{ref}}(\widehat{\theta})$ adapts to the risk of the QKD device environment. The trust condition of QKD networks is defined as follows:
\begin{equation}
    \widehat{R_{\epsilon}} (\widehat{\theta})\leq \widehat{R_{ref}}(\widehat{\theta}). 
\end{equation}
For the trusted QKD networks, this condition provides safety assurance to monitor and measure the borderline of key transmissions that are at high risk of potential Trojan-horse attacks, i.e., $\widehat{R_{\epsilon}} (\widehat{\theta})>\widehat{R_{ref}}(\widehat{\theta})$. 
\section{Category-based Goodness-of-Fit GMM learning of QBER estimation for risk measurement}
In this section, we provide a novel approach for unsupervised machine learning, specifically designed for the learner denoted as $\widehat{\textbf{Q}}=\lambda(\widehat{\theta})$, as described in Equation \ref{gamma}. The category-based GMM \cite{Yan2017} is the utilization of statistical tools as an integral component of a data-driven approach. This evaluation procedure entails the implementation of a method known as soft clustering, where data samples are assigned to distinct groups based on certain criteria, resulting in the generation of a numerical categorization output. As an alternative to the methodology presented in \cite{Yan2017}, we propose using unsupervised learning techniques to facilitate the model-fitting process. The selection of the Kolmogorov-Smirnov (KS) criteria is based on its status as the sole extensively established goodness-of-fit criterion that exhibits competitiveness when compared to other methods examined in the literature, particularly in relation to shift and comparable alternatives. Furthermore, the utilization of this method allows for the development of straightforward confidence processes and tests. The approach we propose combines the utilization of the two-sample KS test \cite{berger2014kolmogorov} to assess the similarity of the two samples in terms of their distribution. This technique is employed as a means of presenting an innovative methodology. This is achieved by evaluating the P-value \cite{moscovich2013} between empirical QBER data and data generated by the tentative GMM. In \cite{moscovich2013}, the resolution to a well-recognized constraint of conventional P-value lies in its limited ability to identify deviations occurring at the extreme ends of the distribution. This test is employed to evaluate the goodness of fit and determine the appropriate categorization for new data points when the dataset does not conform well to the present category of GMM cluster distribution. With the core of the Expectation Maximization (EM) algorithm, we present the proposed category-based GMM KS learning to provide the estimated QBER pdf for the aforementioned empirical risk analysis. 
\begin{algorithm}\label{alg1}
	\caption{GMM model fitting using EM KS-test}
	\begin{algorithmic}[1]
		\State \textbf{Input:} Target dataset $F_s$, GMM parameter $G_s$, Number of GMM clusters $2,\ldots,c_{max}$, GMM maximum trial number $T_{max}$, EM maximum iteration $I_{max}$ 
		\State \textbf{Output:} Distribution set of GMMs with parameter $\{G_s\}_c$ and  $\{p_s\}_c$ as P-value of KS test; 
		\State \textbf{Initialize:} GMM with random parameter $G_s$ 
		\For {$c=2,\ldots,c_{max}$}
				\State  Run EM algorithm to fit the target dataset $F_s$: 
				\State  Store $gmm(G_s, c) \leftarrow gmm(G_s, c, {m}')$ and $\{p_s\}_c \leftarrow KS~test(gmm_s(G_s, c, m'),F_s)$, where $m'=\argmaxA_{m}\{ p_s \leftarrow KS~test(gmm_s(G_s, c, m),F_s)\}, m= 1,.., T_{max}$;
		\EndFor       
		\State Store the GMM parameter $\{G_s\}_c$ of $ gmm(G_s, c)$, $c\in [2,c_{max} ] $;
		\State return $\{G_s\}_c $, $\{p_s\}_c$
        \end{algorithmic}
\end{algorithm}

As presented in Algorithm 1, we apply this learning methodology as the core of the proposed GMM KS learner $\lambda(\widehat{\theta})$. In line 2, the parameter $\widehat{\theta}$ can be obtained as $\{G_s\}_c$. In line 6, an exhaustive search of $T_{max}\times I_{max}$ is performed to fit the empirical data by comparing the similarity between $gmm_s(G_s, c, m)$ and $F_s$. The GMM parameter $\{G_s\}_c$ is recorded in line 7 along with its corresponding maximum P-value $\{p_s\}_c$. Next, the training phase of the proposed leaner is presented in Algorithm 2 using the core of Algorithm 1.  
\begin{algorithm}\label{alg2}
	\caption{Category-based GMM-KS-test learning in training phase}
	\begin{algorithmic}[1]
		\State \textbf{Input:} Divide the training dataset into folds $F_1,...,F_K$, threshold $\varsigma$ to check the P-value for the KS test 
		\State \textbf{Output:} Category dataset of $C_1,\ldots,C_H$, $\{\{G_s\}_c\}_h$ and $\{\{p_s\}_c\}_h, h=1,\ldots,H$
		\State \textbf{Initialize:} $C_1\leftarrow F_1, h=1$ used as the input of Algorithm 1
		\For {$s=2,\ldots,K$}
				\State  Run Algorithm 1 to fit the target dataset $F_s$ and set $T_{max}=T_{Training}$:
                    \State ~~~~output $\{G_s\}_c \rightarrow\{\{G_s\}_c\}_h $ and $\{p_s\}_c\rightarrow\{\{p_s\}_c\}_h$  
                    \If{$\{p_s\}_c>\varsigma$}{
                     $ F_s \rightarrow C_h $}
                    \Else \State $ F_s \rightarrow C_{h+1} $ augment a new category $C_{h+1}$ \State $h=h+1$
                    \EndIf
		\EndFor         
		\State return $C_h$, $\{\{G_s\}_c\}_h $ and $\{\{p_s\}_c\}_h$ 
        \end{algorithmic}
\end{algorithm}

In line 1, the empirical training dataset can be partitioned into many folds. In order to establish a suitable correspondence between the two samples, a threshold $\varsigma$ is provided. As indicated in line 7, the purpose of this threshold is to maximize the probability $P(\epsilon=\widehat{\epsilon}\mid\widehat{\textbf{Q}})$ while minimizing the number of categories. Upon the failure of Algorithm 1 to adequately maintain the desired level of fit, a new category is established in line 9. Finally, the testing phase of Algorithm 3 involves conducting identical learning as Algorithm 2 and is stated as follows.

In lines 6 to 10, the sole distinction lies in the absence of enhanced categories during the test phase, and instead, the focus is on identifying the most suitable match within the pre-existing categories. Following the completion of the learning process, the learner performs a thorough and comprehensive search of the $T_{Test}$ for each category, as indicated in lines 12 to 14.
\begin{algorithm}\label{alg3}
	\caption{Category-based GMM-KS-test learning during the testing phase}
	\begin{algorithmic}[1]
		\State \textbf{Input:} Divide the testing dataset into folds $F_1,...,F_Z$, threshold $\varsigma$ to check the P-value for the KS test, and category dataset of $C_1,...,C_H$ from the training phase
		\State \textbf{Output:}  $\{\{G_s\}_c\}_h$ and $\{\{p_s\}_c\}_h, h=1,...,H$
		\State \textbf{Initialize:} $\{\{G_s\}_c\}_h $ using as the initial parameter of Algorithm 1
		\For {$s=1,\ldots,Z$}
                \For {$h=1,\ldots,H$}
				\State  Run Algorithm 1 using the parameter $\{\{G_s\}_c\}_h $ to fit the target dataset $F_s$ with $T_{max}=T_{Training}$:
                    \State ~~~~output $\{\{G_s\}_c\}_h $ and $\{\{p_s\}_c\}_h$   
                    \If{$\{\{p_s\}_c\}_h>\varsigma$ or h=H}{
                     $ F_s \rightarrow C_h $} break
                    \EndIf
                \EndFor     
		\EndFor    
             \For {h=1\ldots,H}
				\State  Run Algorithm 1 using the parameter $\{\{G_s\}_c\}_h $ and the target dataset $C_h$ with $T_{max}=T_{Test}$:
                    \State ~~~~output $\{\{G_s\}_c\}_h $ and $\{\{p_s\}_c\}_h$  
                \EndFor  
		\State return $\{\{G_s\}_c\}_h $ and $\{\{p_s\}_c\}_h$ 
        \end{algorithmic}
\end{algorithm}
\section{ Performance analysis of the proposed risk-aware machine learning}
The category-based goodness-of-fit GMM learning provides a learner with a model fitting the QBER model. Therefore, the probability $P\{\widehat{\textbf{\textbf{Q}}}=\lambda(\widehat{\theta})\}$ can be empirically calculated in the following form using the learner $\lambda_i(\widehat{\theta})$.
\begin{equation}
P\{\widehat{\textbf{\textbf{Q}}}=\lambda(\widehat{\theta})\} =\sum_{i=1}^{H}  P(\widehat{\textbf{\textbf{Q}}}\mid \lambda_i(\widehat{\theta}))P(\lambda_i(\widehat{\theta})),
\end{equation}
where $P(\lambda_i(\widehat{\theta}))=N_i/N$ and $N_i$ is the number of category sample.  

Therefore, the risk loss function can be derived as follows:
\begin{align}
\gamma(\widehat{\theta} ,\epsilon_j  ) = & \sum_{i=1}^{H}\left [\delta_{Eve}(\widehat{\textbf{Q}}\mid \lambda_i(\widehat{\theta}))P(\textbf{V}) + \delta_{var}(\widehat{\textbf{Q}}\mid \lambda_i(\widehat{\theta}))P(\textbf{R}) \right. \notag\\
&\left. - \delta_{Eve}(\widehat{\textbf{Q}}\mid \lambda_i(\widehat{\theta}))\delta_{var}(\widehat{\textbf{Q}}\mid \lambda_i(\widehat{\theta}))P(\textbf{V}) \right ] \notag\\
& \times P(\widehat{\textbf{Q}}\mid \lambda_i(\widehat{\theta}))P(\lambda_i(\widehat{\theta})),
\end{align}
where 
\begin{align}
    P(\textbf{V})
=\frac{1}{N}\sum_{i=1}^{H}\sum_{j=1}^{N_i} T_{Bayes}\{\eta(\epsilon_j \mid \lambda_i(\widehat{\theta})) > \alpha\},
\end{align}
and
\begin{equation}
    P(\textbf{R})
=\frac{1}{N}\sum_{i=1}^{H}\sum_{j=1}^{N_i} T_{Bayes}\{\eta(\epsilon_j \mid \lambda_i(\widehat{\theta})) > \alpha_{1m}\},
\end{equation}
where the value of $\alpha_{1m}$ is based on the posterior probability of $\delta_{var}(\widehat{\textbf{Q}}\mid \lambda_i(\widehat{\theta}))$ over a quantum channel that is 1 meter long. In our simulation, the parameter $\alpha_{1m}$ is assigned a constant value of 0.001\% for the quantum channels with lengths of 1km and 30km. In accordance with the principle of the Bayes classifier, the value of $\alpha$ is established by the maximal loss value of the GMM category. This value is dependent on the proposed method of effectively learning the time-variant quantum channel in the absence of Eve's attacks.
For a constant gate level, we have 
\begin{equation}
    \alpha = \argmaxA_{i=1\ldots H} \delta_{var}(\widehat{\textbf{Q}}\mid \lambda_i(\widehat{\theta}), \tilde{\textbf{V}}),
\end{equation}
where $\tilde{\textbf{V}}$ denotes the case of the absence of Eve's attacks. According to \cite{Devroye1996}, the optimal gate $\alpha^i_{opt}$ should depend on the GMM category as follows:
\begin{equation}
    \alpha^i_{opt} =  \delta_{var}(\widehat{\textbf{Q}}\mid \lambda_i(\widehat{\theta}), \tilde{\textbf{V}}).
\end{equation}
The proof of the Bayes optimal decision presented in \cite{Devroye1996} can be straightforwardly comprehended by substituting $P[Y=1\mid X=x]=\eta(\epsilon_j \mid \lambda_i(\widehat{\theta}))$ and $P[Y=0\mid X=x]=\alpha^i_{opt}$ for trusted QKD networks. Therefore, the Bayes classifier for identifying potential Eve attacks can be derived as follows:
\begin{equation}
    T_{Bayes}= \left\{\begin{matrix}
1,\eta(\epsilon_j \mid \lambda_i(\widehat{\theta})) > \alpha^i_{opt}
\\ 
0,\eta(\epsilon_j \mid \lambda_i(\widehat{\theta})) < \alpha^i_{opt}.
\end{matrix}\right.
\end{equation}
Finally, the proposed risk-aware technique of Sections III and IV is summarised and illustrated in Fig. \ref{fig_riskAI}. The upper blocks of the design computation can be viewed as the initialization of the proposed approach, which is necessary for the calculation of the defense gate of the Eve-free event. A trustworthy monitor is able to identify the attacks that occurred during Trojan Horse Eve's attack in the lower blocks. It is worth noting that $\alpha_{1m}$ can be obtained by calculating (9) over a quantum channel length of 1 meter and $ \lambda_i(\widehat{\theta})$ of Algorithm 2 and 3 is different for risk analysis to detect any latent Eve attacks both in the train and test phases. 
\begin{figure*}[htbp]
\centering
\captionsetup{justification=centering,margin=2cm}
\centerline{\includegraphics[width=\textwidth]{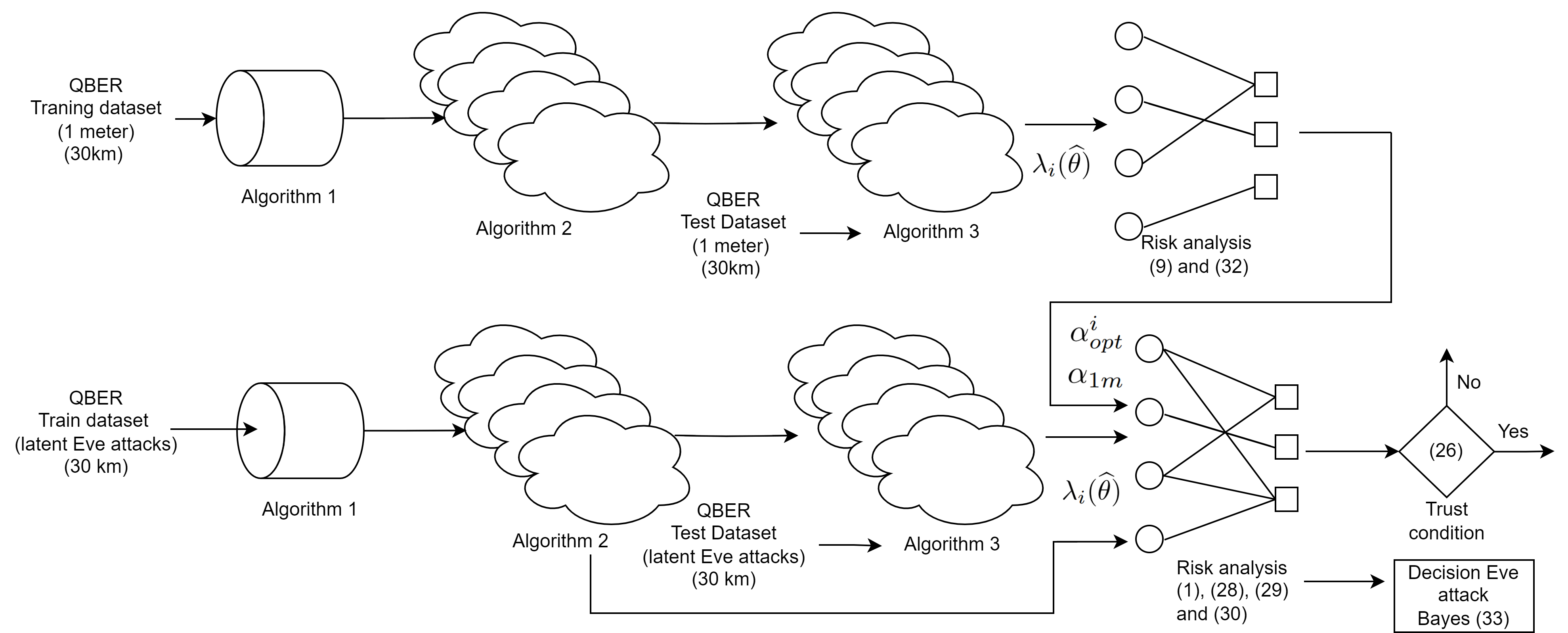}}
\caption{The overview of proposed risk-aware machine learning for latent Eve detection}
\label{fig_riskAI}
\end{figure*}
\begin{figure}[htbp]
\includegraphics[width=0.48\textwidth]{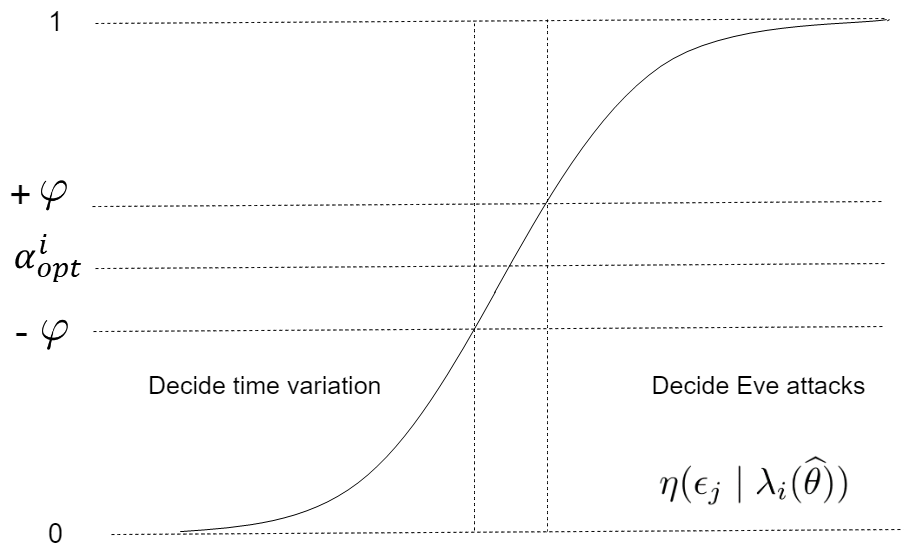}
\caption{The Bayes decision for trusted QKD scenario}
\label{fig1_2}
\end{figure}

In Section IV, the choice of $\varsigma=0.95$ guarantees that the P-value of the proposed learner possesses a $\varphi$ margin of estimated error with a corresponding occurrence probability of 5\%.  Furthermore, the erroneous detection probability is based on the calculation of the probability of the estimated error of the proposed learner. Therefore, the posterior probability $\tau$ that the Bayes classifier has an erroneous classification while considering the difference between $\eta(\epsilon_j \mid \lambda_i(\widehat{\theta}))$ and $\alpha^i_{opt}$. The difference could have an estimated error of less than $\varphi$ as illustrated in Fig. \ref{fig1_2} and apply the analogous of Markov Inequality as follows:
\begin{equation}
  \tau =P\{ \mid \eta(\epsilon_j \mid \lambda_i(\widehat{\theta})) - \alpha^i_{opt}\mid < \varphi \}
\end{equation}
\begin{align}
  =P\{  (\eta(\epsilon_j \mid \lambda_i(\widehat{\theta})) - \alpha^i_{opt} )< \varphi \}+\notag\\P\{  (\alpha^i_{opt} -\eta(\epsilon_j \mid \lambda_i(\widehat{\theta})))  < \varphi \}.
\end{align}
Therefore, 
\begin{align}
 \tau \leq \tau_{upper},~~~~~~~~~~~~~~~~~~~~~~~~~~~~~\notag\\
 \tau_{upper}=\frac{\textbf{E} ( (\eta(\epsilon_j \mid \lambda_i(\widehat{\theta})) - \alpha^i_{opt} ) ) }{\varphi}+ \frac{\textbf{E} (\alpha^i_{opt} -\eta(\epsilon_j \mid \lambda_i(\widehat{\theta}))) }{\varphi},
\end{align}
where the above 2 events occur with an equal probability of $(1-\varsigma)/2$ under the assumption of a uniform distribution and bound by the expectation value of $\textbf{E}$. We provide Lemma 1 and the proof as follows:
\begin{lemma}
    Given the QKD protocol with provable security, Alice and Bob explicitly detect the presence of Eve with high probability. As analogous to this trustworthy attribute, the Eve detection probability $\Psi$ is at least $1-\tau_{upper}(1-\varsigma)/2$ for our proposed trusted QKD scenario.
\end{lemma}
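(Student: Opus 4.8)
The plan is to write $\Psi = 1 - P(\text{erroneous Eve decision})$ and then bound the right-hand probability by the two ingredients already assembled above: the $(1-\varsigma)$ budget for the learner's P-value error and the Markov-type tail bound $\tau \le \tau_{upper}$. The statement is an analogy, so part of the proof is also to make the correspondence explicit: in a provably secure QKD protocol, QBER monitoring against the abort rate $\rho$ forces Alice and Bob to flag Eve's interference with probability approaching one; here the Bayes classifier $T_{Bayes}$ plays that role and we must show its detection probability is bounded below by $1-\tau_{upper}(1-\varsigma)/2$.

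First I would fix a category $i$ and a sample $\epsilon_j$ and recall, from the Bayes-optimal construction borrowed from \cite{Devroye1996} with the identifications $P[Y=1\mid X=x]=\eta(\epsilon_j\mid\lambda_i(\widehat{\theta}))$ and $P[Y=0\mid X=x]=\alpha^i_{opt}$, that $T_{Bayes}$ returns the correct label whenever the \emph{estimated} posterior produced by $\lambda_i(\widehat{\theta})$ lies on the same side of the optimal gate $\alpha^i_{opt}$ as the true posterior $\eta$. Hence a detection error can occur only inside the intersection of two events: (a) the learner's P-value deviates from the exact value by the $\varphi$-margin, which by the design choice $\varsigma=0.95$ in Section~IV happens with total probability $1-\varsigma$ and, under the uniform modelling of the deviation, splits into the two symmetric sub-cases $\eta-\alpha^i_{opt}\in(0,\varphi)$ and $\alpha^i_{opt}-\eta\in(0,\varphi)$, each of probability $(1-\varsigma)/2$; and (b) the true posterior already lies within $\varphi$ of the gate, i.e.\ $|\eta(\epsilon_j\mid\lambda_i(\widehat{\theta}))-\alpha^i_{opt}|<\varphi$, an event whose probability is precisely $\tau$ and therefore at most $\tau_{upper}$ by the Markov inequality applied just above.

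Next I would combine these: treating the learner's fitting error as independent of where the true posterior sits relative to the gate (so the two probabilities multiply) and keeping only one of the two symmetric $(1-\varsigma)/2$ branches, the probability of an erroneous Eve decision on a given sample is at most $\tau_{upper}(1-\varsigma)/2$. Averaging this bound over the $N$ samples and $H$ categories, exactly as is done for $P(\textbf{V})$ in Section~V, leaves the constant unchanged, so $\Psi = 1 - P(\text{erroneous detection}) \ge 1 - \tau_{upper}(1-\varsigma)/2$. Since $\tau_{upper}$ and $1-\varsigma$ are both small, $\Psi$ is pushed close to one, which is the trustworthy attribute claimed, completing the analogy with the provably secure protocol.

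The main obstacle is the bookkeeping linking (a) and (b): one must argue carefully that the misclassification event is genuinely contained in the intersection of the ``P-value off by $\varphi$'' event and the ``true posterior within $\varphi$ of the gate'' event, and that the two probabilities combine multiplicatively with a single $(1-\varsigma)/2$ factor rather than $(1-\varsigma)$ or a union-bound sum over both branches. This rests on the symmetry supplied by the uniform-distribution assumption on the estimation error and on the implicit independence between the learner's goodness-of-fit error and the position of $\eta$; making those assumptions explicit, and verifying that the category/sample averaging does not inflate the constant, is the delicate part. Everything else — the Markov bound $\tau\le\tau_{upper}$ and the Bayes-optimality of the gate $\alpha^i_{opt}$ — is already in hand from the preceding discussion.
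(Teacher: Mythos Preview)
Your proposal follows essentially the same route as the paper's proof: both combine the Markov bound $\tau\le\tau_{upper}$ with the $(1-\varsigma)/2$ probability assigned (under the uniform assumption) to each symmetric branch of the learner's estimation error, and then read off $\Psi\ge 1-\tau_{upper}(1-\varsigma)/2$. The only difference is emphasis: the paper spends its proof re-deriving the Markov step in detail and then asserts the final inequality in a single line, whereas you take $\tau\le\tau_{upper}$ as already in hand and focus on justifying the combination---which is exactly the part the paper leaves implicit.
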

\begin{proof}
We apply the theorem of Markov inequality to calculate the upper bound of $\tau$ as follows:
Let $\phi^{+}$  and $\phi^{-}$ denote the event  $\{ (\eta(\epsilon_j \mid \lambda_i(\widehat{\theta})) - \alpha^i_{opt})< \varphi \}$ and $\{ (\alpha^i_{opt} -\eta(\epsilon_j \mid \lambda_i(\widehat{\theta})))< \varphi \}$ respectively. Let \textbf{A} and \textbf{B} denote the random variable of $(\eta(\epsilon_j \mid \lambda_i(\widehat{\theta})) - \alpha^i_{opt})$ and $(\alpha^i_{opt} -\eta(\epsilon_j \mid \lambda_i(\widehat{\theta})))$ respectively.
\begin{equation}
  \textbf{E}(\phi^{+})=\sum_{g \in G}p\{g\}\phi^{+}(g)= \sum_{g \in \phi^{+}}p\{g\}\textbf{A}(g)+ \sum_{g \in \bar{\phi^{+}}}p\{g\}\textbf{A}(g).
\end{equation}
Similarly, we have
\begin{equation}
  \textbf{E}(\phi^{-})=\sum_{g \in G}p\{g\}\phi^{-}(g)= \sum_{g \in \phi^{-}}p\{g\}\textbf{B}(g)+ \sum_{g \in \bar{\phi^{-}}}p\{g\}\textbf{B}(g).
\end{equation}
Since $\textbf{A}$ and $\textbf{B}$ are positive value,
\begin{equation}
   \textbf{E}(\phi^{+})\geq  \sum_{g \in \phi^{+}}p\{g\}\textbf{A}(g)\geq \varphi \sum_{g \in \phi^{+}}p(g)=\varphi P\{\textbf{A}<\varphi\},
\end{equation}
and
\begin{equation}
   \textbf{E}(\phi^{-})\geq  \sum_{g \in \phi^{-}}p\{g\}\textbf{B}(g)\geq \varphi \sum_{g \in \phi^{-}}p(g)=\varphi P\{\textbf{B}<\varphi\},
\end{equation}
Therefore,
\begin{equation}
  \tau_{upper}=(\textbf{E}(\phi^{+}) +\textbf{E}(\phi^{-}))/\varphi \geq P(A<\varphi) +P(B<\varphi)=\tau
\end{equation}
Consequently, 
\begin{equation}
    \Psi \geq 1-\tau_{upper}(1-\varsigma)/2.
\end{equation}
\end{proof}
Given Lemma 1 and the high probability of Eve detection as the value of $1-\tau(1-\varsigma)/2$, it can be extrapolated that the proposed trusted QKD scenario attains trustworthy QKD networks.

\section{Numerical result}
\subsection{Trusted QKD Scenario Configuration}
The statistical data for the QBER was acquired via the LUQCIA project\footnote{https://www.uni.lu/en/news/a-first-testbed-for-quantum-communication-infrastructure-in-luxembourg/}. This data was gathered using the QNET WEBAPI interface version 1.168 at constant intervals over a span of many months. These data points were used to establish the sample space for our observations. 
\begin{itemize}
    \item The experimental setup for measuring quantum channel distance consists of three different distances: 1m, 1km, and 30km. The number of QBER experiment $N=\{N_{1m}, N_{1km},N_{30km} \}=\{57471,52104,47768 \}$.
    \item In the context of IDQ QKD devices, if the visibility is below 0.9 or the QBER exceeds the abortion rate of 11\%, it can be shown that Bob is unlikely to receive any raw keys reduced by privacy amplification, leading to a key rate of 0. Eve holds the belief that eavesdropping does not have the capability of impacting the visibility of data transmission, which is only influenced by the presence of an optical fiber link. The various distance quantum channel experiments are able to obtain the statistical QBER data via IDQ QKD device pairs. 
    \item In the proposed learning scenario, the values of $T_{Training}$ and $T_{Test}$ are determined as 100 and 10000, respectively. The value of $c_{max}$ is selected as 15 and 45. The maximum number of iterations $I_{max}$ for the EM algorithm is set to 100.
\end{itemize}

\subsection{QBER Estimation and Risk Analysis for a Trustworthy Monitor}
The P-value in Algorithm 2 is shown in Fig. \ref{fig1}, where the value of K is set to 5, 6, 7, and 8, and $N$ is set to $N_{1m}$. This presentation illustrates the application of GMM fitting to the empirical QBER data. In the test phase, cross-validation \cite{schaffer1993selecting} is a highly effective method for evaluating the performance of the suggested learner. The QBER empirical sample $N=\{N_{1km},N_{30km}\}$ is partitioned into four folds. In each cross-validation iteration, one fold is designated as the test set, while the remaining folds serve as the training set as illustrated by the P-value in Fig. \ref{fig2} over 1 kilometer and in Fig. \ref{fig4} over the 30km optical quantum channel. As shown in Fig. \ref{fig3} using Algorithm 3, the Akaike Information Criterion (AIC) is utilized in order to assess the performance of statistical models and ascertain their level of efficiency when applied to a certain dataset. Furthermore, we can observe the results of high-dimensional GMM fitting to the test dataset in Fig. \ref{fig5}, and an identical trend was observed for the AIC result. The goodness-of-fit of the high-dimensional GMM with a cluster range from 20 to 45 is better than the range from 2 to 15.

The risk analysis is based on the assumption that the risk control weighting $H_{M_j}$ follows a normal distribution. In order to demonstrate the effectiveness of the proposed approach, risk control weighting has a mean value that is uniformly distributed between 0.5 and 1, with a standard deviation that is equal to the mean value. Over a time-variant quantum channel, this configuration guarantees the simulation's inclusion, as illustrated in Fig. \ref{fig7}, Fig. \ref{fig8}, and in Fig. \ref{fig8_1} in accordance with the Trojan-horse challenge raised in \cite{Jain2015}. It is important to highlight that a high-dimensional GMM can provide better goodness-of-fit for the QBER when the risk occurrence $\epsilon_j$ exceeds the threshold $\rho$ and when there is an opportunity for identifying larger potential risks.

Finally, we implement Trojan-horse attacks on our trusted QKD scenario, wherein the QKD system is subjected to Eve's attacks following a Poisson distribution using parameter $\upsilon_e$. It is assumed that each instance of a Trojan horse attack results in an increase in the QBER to a uniform distribution ranging from 0.05 to 0.055. The observation of Alice's examination of the defense gate associated with Eve's arrival is intriguing. In Fig. \ref{fig9} and Fig. \ref{fig10}, the proposed trustworthy monitor can recognize the occurrence of Eve's Trojan horse attacks due to the violation of the trust condition. The value of $\alpha$ has been established as the value of $0.2\%$ according to the maximum value of the risk loss function as shown in Fig. \ref{fig8}. Additionally, Fig. \ref{fig10} sets $\upsilon_e$ to 4000, where the total number of Eve attacks is 10 times in order to compare with 100 times as shown in Fig. \ref{fig9}. It is noteworthy to observe that the risk analysis of Fig. \ref{fig9} is comparatively greater than Fig. \ref{fig10} and the proposed QKD scenario can detect Trojan-horse attacks with only the proposition of $0.02\%$. 

This discrepancy signifies a distinct sensitivity level in Eve's detection, demonstrating our proposed trustworthy QKD scenario by means of the trust condition. It is interesting to note that the defense gate $0.25\%$ is too high and less sensitive to detect potential Eve's attacks, comparing the risk analysis of Fig. \ref{fig9} and Fig. \ref{fig10}. Consequently, the potential Trojan horse Eve's attack can be successfully detected by our proposed trusted QKD scenario, as shown in the risk difference and loss difference of value between $0.02\%$ and $0.05\%$ in Fig. \ref{fig8} and Fig. \ref{fig9}. The design of the trustworthy monitor being proposed is predominately determined by the level of defense gate that is portrayed as obstructing Eve's attacks. In order to preserve generality, we presume that Eve is not acquainted with the time-variant channel effect.

\begin{figure}[htbp]
\centering
\centerline{\includegraphics[width=0.5\textwidth]{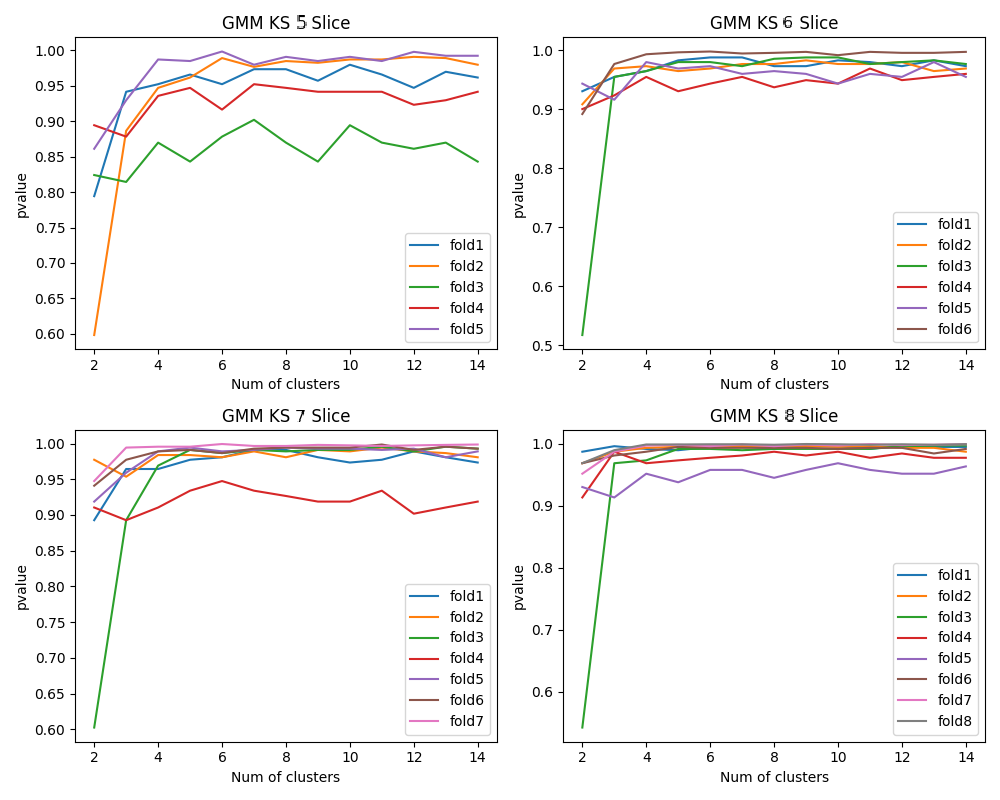}}
\caption{Multiple folders model fitting using Algorithm 2 over the 1m optical quantum channel}
\label{fig1}
\end{figure}
\begin{figure}[htbp]
\centerline{\includegraphics[width=0.5\textwidth]{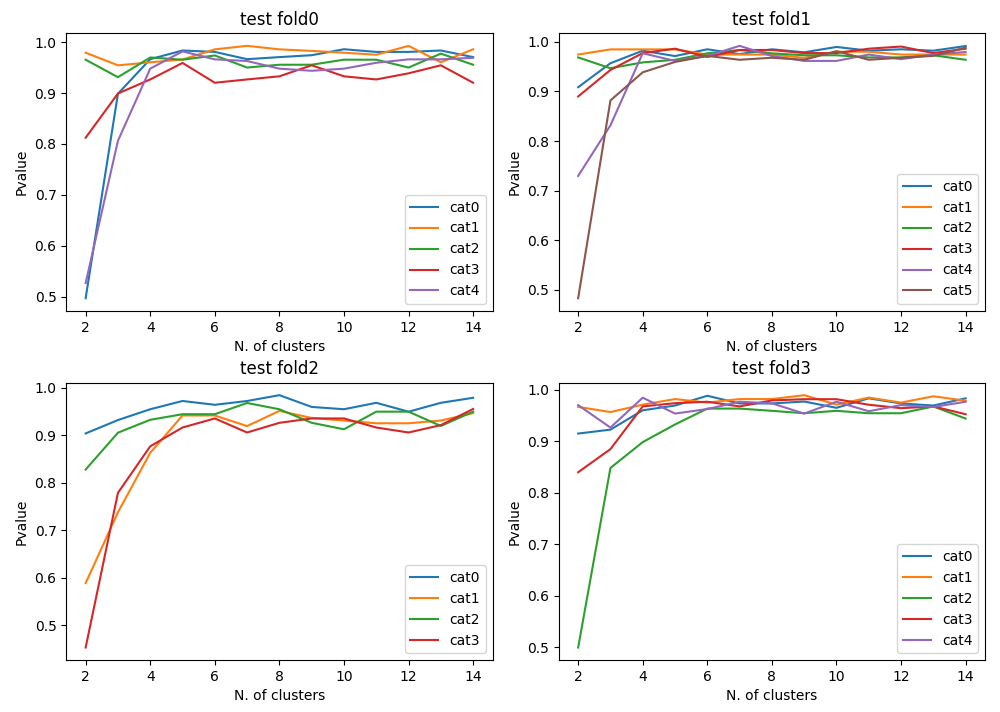}}
\caption{4-fold cross-validation for Algorithm 3 where $\varsigma=0.95$ over the 1km optical quantum channel}
\label{fig2}
\end{figure}
\begin{figure}[htbp]
\centerline{\includegraphics[width=0.5\textwidth]{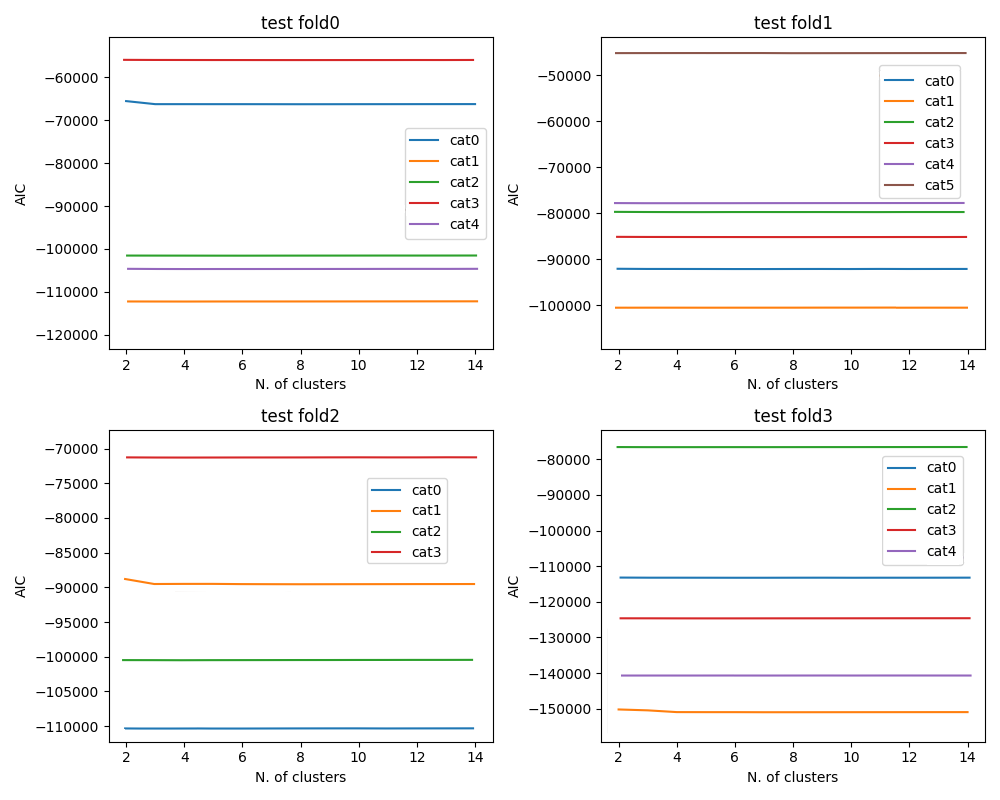}}
\caption{AIC Results of Algorithm 3 where $\varsigma=0.95$ over the 1km optical quantum channel}
\label{fig3}
\end{figure}
\begin{figure}[htbp]
\centerline{\includegraphics[width=0.5\textwidth]{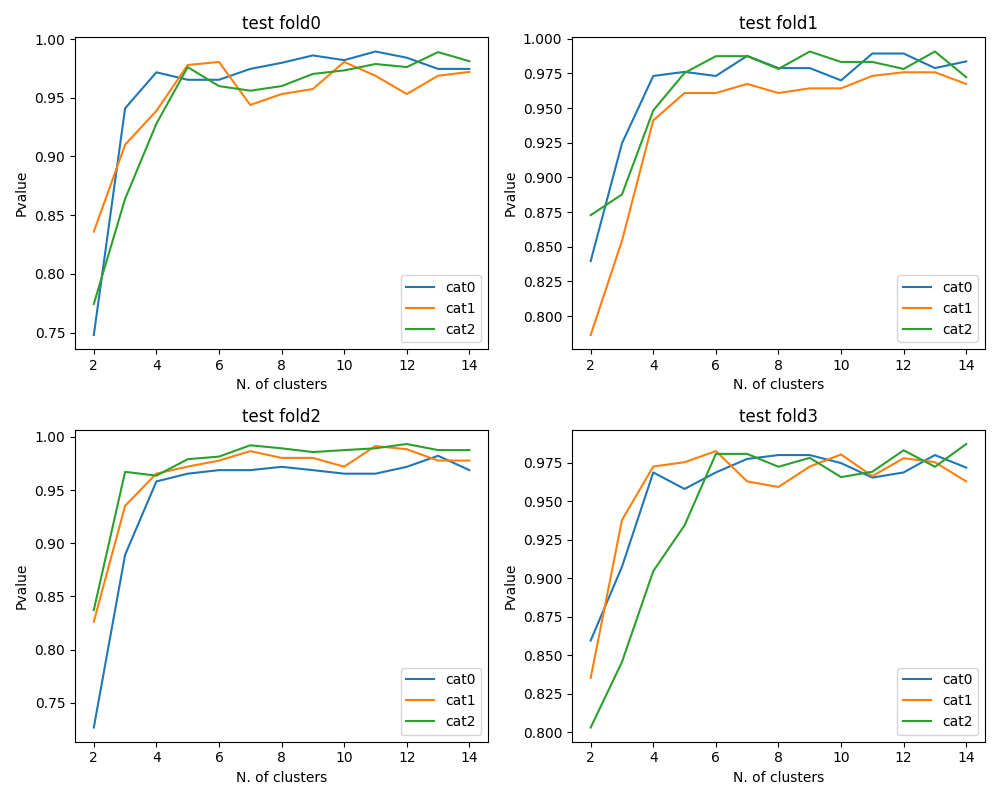}}
\caption{4-fold cross-validation for Algorithm 3 where $\varsigma=0.95$ over the 30km optical quantum channel}
\label{fig4}
\end{figure}
\begin{figure}[htbp]
\centerline{\includegraphics[width=0.5\textwidth]{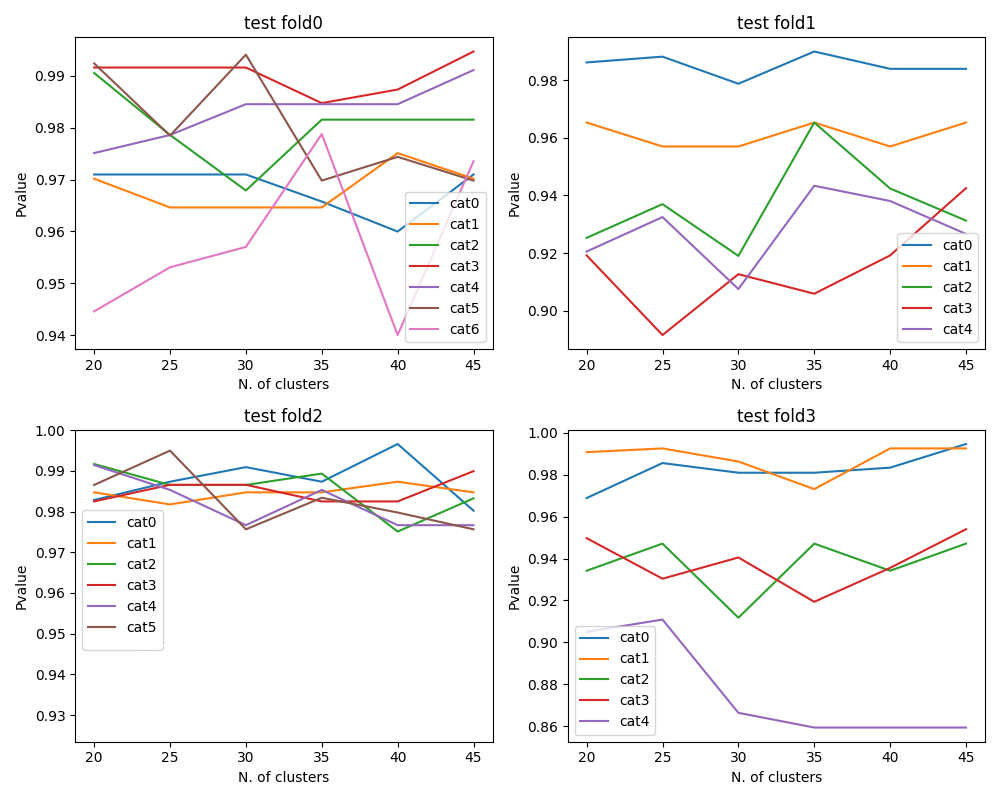}}
\caption{High dimension 4-fold cross-validation of Algorithm 3 where $\varsigma=0.95$ over the 1km optical quantum channel}
\label{fig5}
\end{figure}
\begin{figure}[htbp]
\centerline{\includegraphics[width=0.5\textwidth]{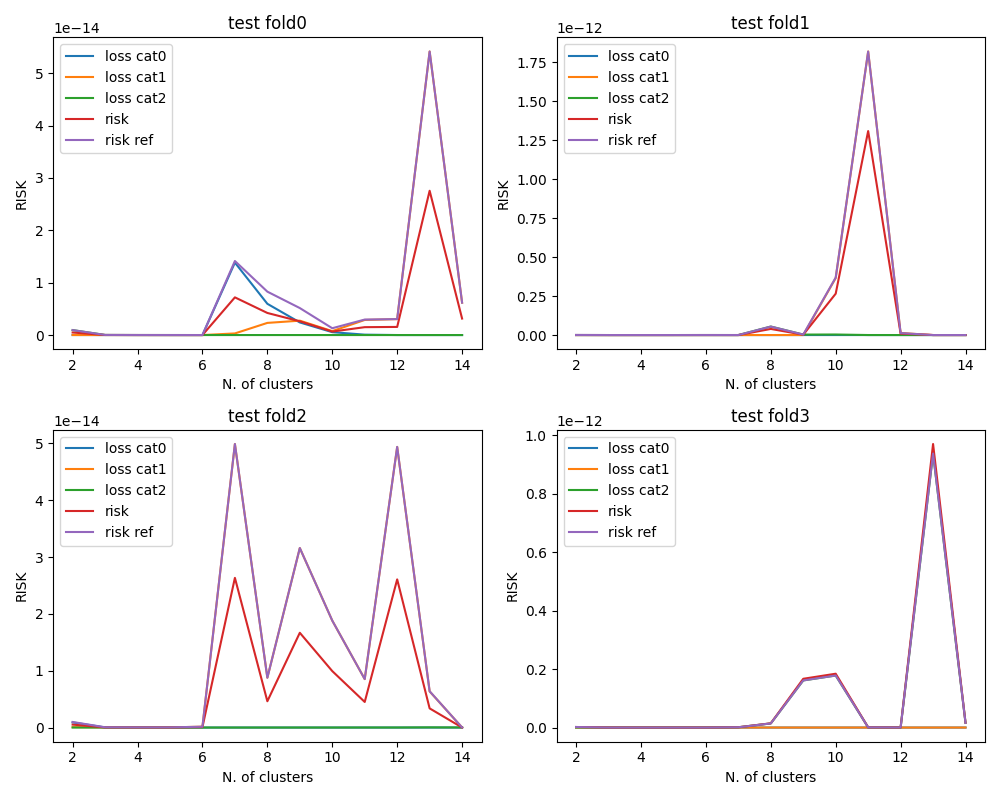}}
\caption{Risk analysis of cross-validation using Algorithm 3 where $\varsigma=0.95$ and $\rho=0.08$ over the 30km optical quantum channel}
\label{fig7}
\end{figure}
\begin{figure}[htbp]
\centerline{\includegraphics[width=0.5\textwidth]{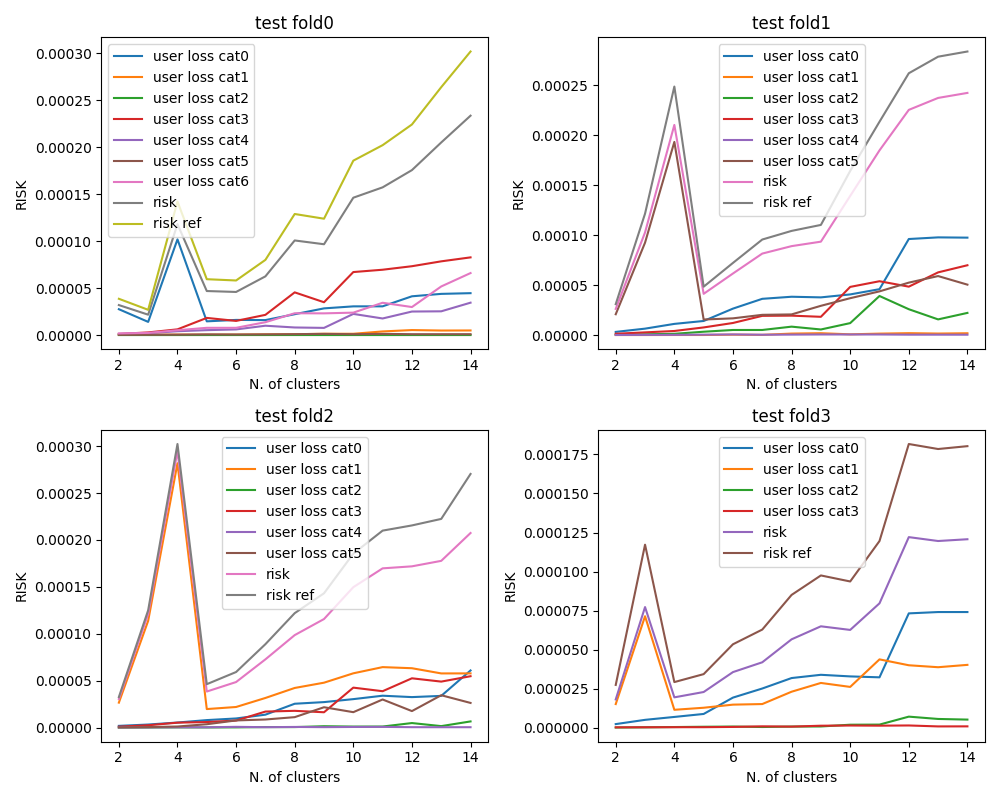}}
\caption{Risk analysis of cross-validation using Algorithm 3 where $\varsigma=0.95$ and $\rho=0.05$ over the 1km optical quantum channel}
\label{fig8_1}
\end{figure}
\begin{figure}[htbp]
\centerline{\includegraphics[width=0.5\textwidth]{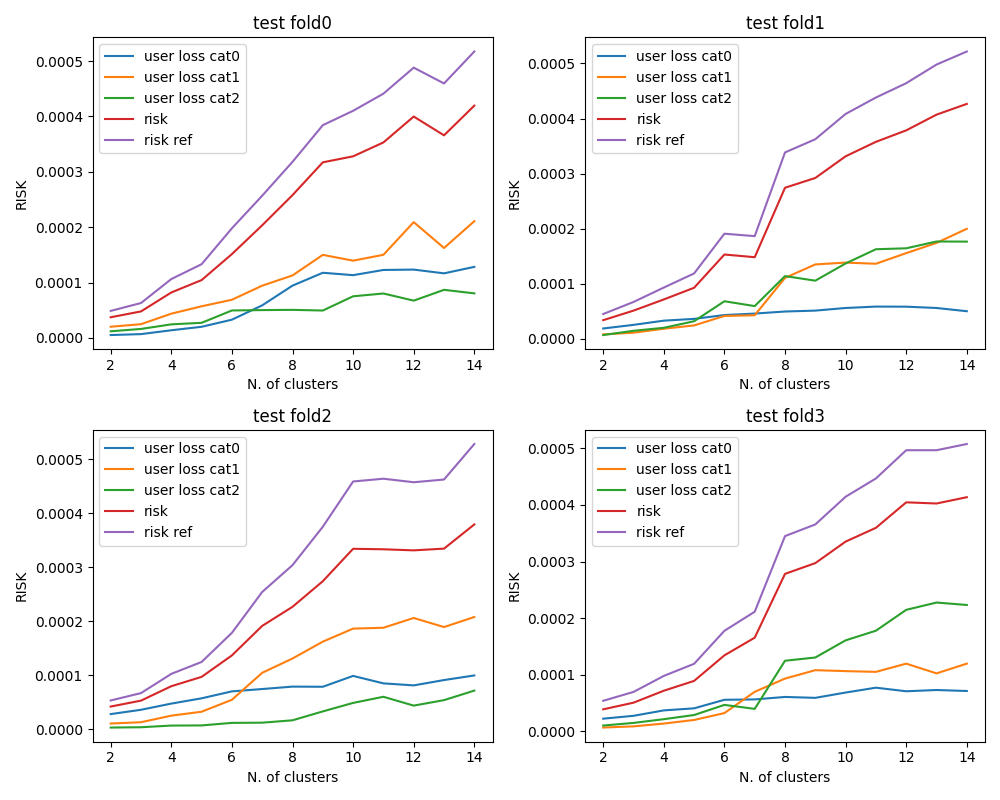}}
\caption{Risk analysis of cross-validation using Algorithm 3 where $\varsigma=0.95$ and $\rho=0.05$ over the 30km optical quantum channel}
\label{fig8}
\end{figure}
\begin{figure}[htbp]
\centerline{\includegraphics[width=0.5\textwidth]{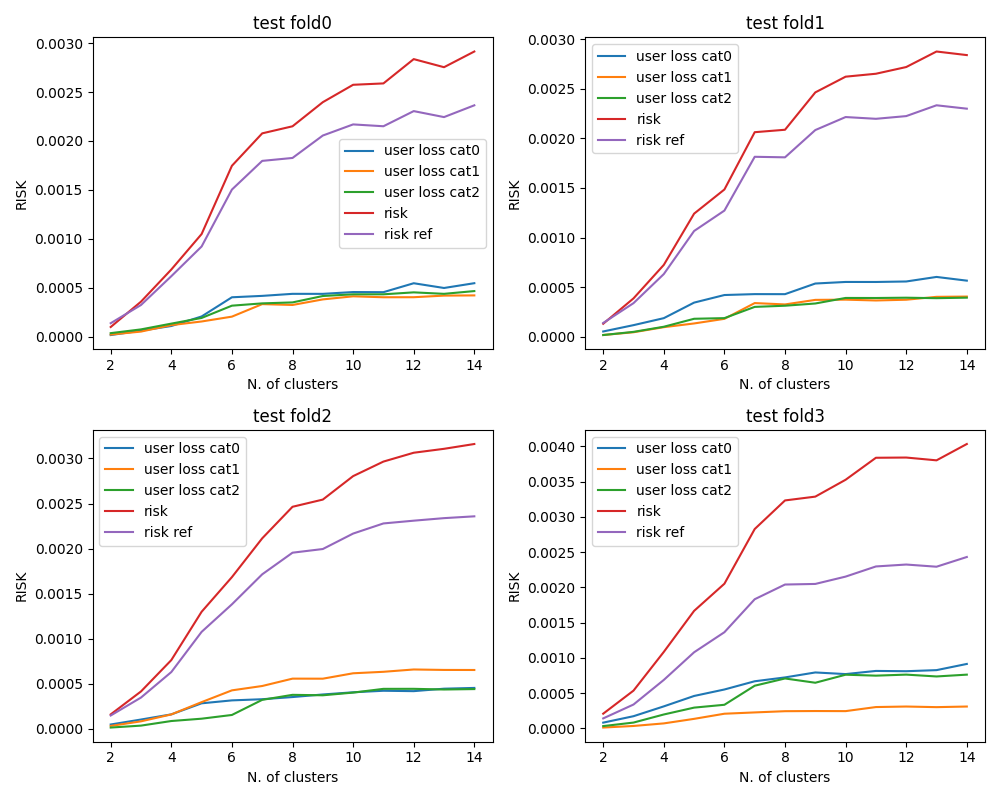}}
\caption{Risk analysis of cross-validation using Algorithm 3 where $\varsigma=0.95$, $\rho=0.05$, $ \alpha=0.2\%$ and $\upsilon_e=500$ with Eve's Trojan-horse attacks over the 30km optical quantum channel}
\label{fig9}
\end{figure}
\begin{figure}[htbp]
\centerline{\includegraphics[width=0.5\textwidth]{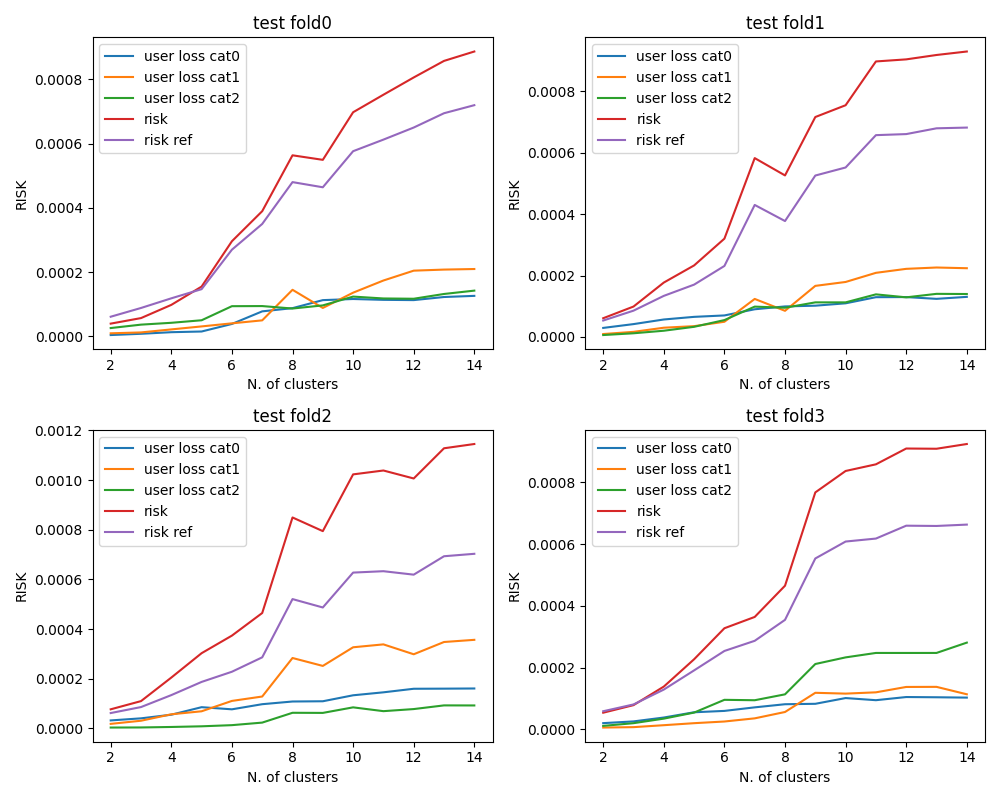}}
\caption{Risk analysis of cross-validation using Algorithm 3 where $\varsigma=0.95$, $\rho=0.05$, $\alpha=0.2\%$ and $\upsilon_e=4000$ with Eve's Trojan-horse attacks over the 30km optical quantum channel}
\label{fig10}
\end{figure}
\section{Conclusion and Future Directions}
The novelty of this study is to examine the incorporation of risk-aware machine learning methods into trustworthy QKD networks, with a particular emphasis on the vulnerability of credentials over a time-variant quantum channel. To accomplish this task, we employ IDQ QKD devices as a means of support. The empirical QBER dataset is effectively estimated by the proposed GMM KS learner, as indicated by the numerical results. The QKD device developed by IDQ is at the forefront of technology and meets the trust condition, making it trustworthy for ultra-secure communication. The simulation has been particularly designed to cater to the network traffic of the SDN controller, and inspired by an optimal classifier to effectively identify any conceivable Trojan horse attacks. The probability of detecting Eve in our suggested trustworthy QKD scenario is computed to demonstrate a significant level of confidence. 

In terms of future directions, this work encourages the adoption of a risk-aware reinforcement learning methodology. This methodology incorporates risk measurement and risk reference to formulate the reward function while considering the trust condition. The agents have the capability of utilizing the risk function recommended by the reward function, as well as online detection of Eve's arrivals and time-varying interference. The aforementioned future avenues can be pursued by drawing inspiration from the fields of Bayesian online change-point detection and risk-aware reinforcement learning. This enables the trusted QKD networks to gain valuable knowledge on the optimal trust policy. This is conducted within the framework of the trusted node and the variant network architecture for the cost-effective deployment of QKD networks.

\bibliographystyle{IEEEtran}
\bibliography{main}

\begin{thebibliography}{10}
\providecommand{\url}[1]{#1}
\csname url@samestyle\endcsname
\providecommand{\newblock}{\relax}
\providecommand{\bibinfo}[2]{#2}
\providecommand{\BIBentrySTDinterwordspacing}{\spaceskip=0pt\relax}
\providecommand{\BIBentryALTinterwordstretchfactor}{4}
\providecommand{\BIBentryALTinterwordspacing}{\spaceskip=\fontdimen2\font plus
\BIBentryALTinterwordstretchfactor\fontdimen3\font minus \fontdimen4\font\relax}
\providecommand{\BIBforeignlanguage}[2]{{%
\expandafter\ifx\csname l@#1\endcsname\relax
\typeout{** WARNING: IEEEtran.bst: No hyphenation pattern has been}%
\typeout{** loaded for the language `#1'. Using the pattern for}%
\typeout{** the default language instead.}%
\else
\language=\csname l@#1\endcsname
\fi
#2}}
\providecommand{\BIBdecl}{\relax}
\BIBdecl

\bibitem{Zio2016}
\BIBentryALTinterwordspacing
E.~Zio, ``Challenges in the vulnerability and risk analysis of critical infrastructures,'' \emph{Reliability Engineering and System Safety}, vol. 152, pp. 137--150, 2016. [Online]. Available: \url{https://www.sciencedirect.com/science/article/pii/S0951832016000508}
\BIBentrySTDinterwordspacing

\bibitem{NAIR2014}
\BIBentryALTinterwordspacing
S.~Nair, J.~L. {de la Vara}, M.~Sabetzadeh, and L.~Briand, ``An extended systematic literature review on provision of evidence for safety certification,'' \emph{Information and Software Technology}, vol.~56, no.~7, pp. 689--717, 2014. [Online]. Available: \url{https://www.sciencedirect.com/science/article/pii/S0950584914000603}
\BIBentrySTDinterwordspacing

\bibitem{Bolbot2019}
\BIBentryALTinterwordspacing
V.~Bolbot, G.~Theotokatos, L.~M. Bujorianu, E.~Boulougouris, and D.~Vassalos, ``{Vulnerabilities and safety assurance methods in Cyber-Physical Systems: A comprehensive review},'' \emph{Reliability Engineering and System Safety}, vol. 182, pp. 179--193, 2019. [Online]. Available: \url{https://www.sciencedirect.com/science/article/pii/S0951832018302709}
\BIBentrySTDinterwordspacing

\bibitem{bell2006introduction}
R.~Bell, ``Introduction to iec 61508,'' in \emph{{Proceedings of the 10th Australian workshop on Safety critical systems and software-Volume 55}}.\hskip 1em plus 0.5em minus 0.4em\relax Citeseer, 2006, pp. 3--12.

\bibitem{leveson2014comparison}
N.~Leveson, C.~Wilkinson, C.~Fleming, J.~Thomas, and I.~Tracy, ``{A comparison of STPA and the ARP 4761 safety assessment process},'' \emph{Massachusetts Institute of Technology, Cambridge, MA}, 2014.

\bibitem{feng2017trusted}
\BIBentryALTinterwordspacing
F.~D. and P.~T.U., \emph{{Trusted Computing: Principles and Applications}}, ser. Advances in Computer Science.\hskip 1em plus 0.5em minus 0.4em\relax De Gruyter, 2017. [Online]. Available: \url{https://books.google.lu/books?id=GLxGDwAAQBAJ}
\BIBentrySTDinterwordspacing

\bibitem{Parny_2023}
\BIBentryALTinterwordspacing
L.~de~Forges~de Parny, O.~Alibart, J.~Debaud, S.~Gressani, A.~Lagarrigue, A.~Martin, A.~Metrat, M.~Schiavon, T.~Troisi, E.~Diamanti, P.~G{\'{e}}lard, E.~Kerstel, S.~Tanzilli, and M.~V.~D. Bossche, ``Satellite-based quantum information networks: use cases, architecture, and roadmap,'' \emph{Communications Physics}, vol.~6, no.~1, jan 2023. [Online]. Available: \url{https://doi.org/10.1038%2Fs42005-022-01123-7}
\BIBentrySTDinterwordspacing

\bibitem{Trinh2018}
T.~P. V., P.~A. T., C.-C. Alberto, and T.~Moria, ``{Quantum Key Distribution over FSO: Current Development and Future Perspectives},'' in \emph{2018 Progress in Electromagnetics Research Symposium (PIERS-Toyama)}, 2018, pp. 1672--1679.

\bibitem{QKD1}
\BIBentryALTinterwordspacing
N.~Gisin, G.~Ribordy, W.~Tittel, and H.~Zbinden, ``Quantum cryptography,'' \emph{Rev. Mod. Phys.}, vol.~74, pp. 145--195, Mar 2002. [Online]. Available: \url{https://link.aps.org/doi/10.1103/RevModPhys.74.145}
\BIBentrySTDinterwordspacing

\bibitem{QKD2}
D.~Mayers, ``Unconditional security in quantum cryptography,'' 2004.

\bibitem{Lajos}
H.~Nedasadat, B.~Zunaira, M.~Robert, N.~S. Xin, and H.~Lajos, ``{Satellite-Based Continuous-Variable Quantum Communications: State-of-the-Art and a Predictive Outlook},'' \emph{IEEE Communications Surveys \& Tutorials}, vol.~21, no.~1, pp. 881--919, 2019.

\bibitem{Erven_2008}
\BIBentryALTinterwordspacing
C.~Erven, C.~Couteau, R.~Laflamme, and G.~Weihs, ``Entangled quantum key distribution over two free-space optical links,'' \emph{Optics Express}, vol.~16, no.~21, p. 16840, oct 2008. [Online]. Available: \url{https://doi.org/10.1364%2Foe.16.016840}
\BIBentrySTDinterwordspacing

\bibitem{Scheidl2009}
T.~Scheidl, R.~Ursin, A.~Fedrizzi, S.~Ramelow, X.~song Ma, T.~Herbst, R.~Prevedel, L.~Ratschbacher, J.~Kofler, T.~Jennewein, and A.~Zeilinger, ``Feasibility of 300km quantum key distribution with entangled states,'' \emph{New Journal of Physics}, vol.~11, p. 085002, 2009.

\bibitem{Fedrizzi_2009}
\BIBentryALTinterwordspacing
A.~Fedrizzi, R.~Ursin, T.~Herbst, M.~Nespoli, R.~Prevedel, T.~Scheidl, F.~Tiefenbacher, T.~Jennewein, and A.~Zeilinger, ``High-fidelity transmission of entanglement over a high-loss free-space channel,'' \emph{Nature Physics}, vol.~5, no.~6, pp. 389--392, may 2009. [Online]. Available: \url{https://doi.org/10.1038%2Fnphys1255}
\BIBentrySTDinterwordspacing

\bibitem{quantique2001quantis}
I.~Quantique and R.~Extractor, ``Quantis,'' \emph{Quantum number generator}, pp. 2001--2010, 2001.

\bibitem{chong2010quantum}
S.-K. Chong and T.~Hwang, ``Quantum key agreement protocol based on bb84,'' \emph{Optics Communications}, vol. 283, no.~6, pp. 1192--1195, 2010.

\bibitem{jain2016attacks}
N.~Jain, B.~Stiller, I.~Khan, D.~Elser, C.~Marquardt, and G.~Leuchs, ``Attacks on practical quantum key distribution systems (and how to prevent them),'' \emph{Contemporary Physics}, vol.~57, no.~3, pp. 366--387, 2016.

\bibitem{Jain2015}
N.~Jain, B.~Stiller, I.~Khan, V.~Makarov, C.~Marquardt, and G.~Leuchs, ``{Risk Analysis of Trojan-Horse Attacks on Practical Quantum Key Distribution Systems},'' \emph{IEEE Journal of Selected Topics in Quantum Electronics}, vol.~21, no.~3, pp. 168--177, 2015.

\bibitem{Naveenkumar2021}
R.~Naveenkumar, N.~Sivamangai, A.~Napolean, and V.~Janani, ``{A Survey on Recent Detection Methods of the Hardware Trojans},'' in \emph{2021 3rd International Conference on Signal Processing and Communication (ICPSC)}, 2021, pp. 139--143.

\bibitem{Zhang2021}
X.~Zhang, R.~Gupta, A.~Mian, N.~Rahnavard, and M.~Shah, ``Cassandra: Detecting trojaned networks from adversarial perturbations,'' \emph{IEEE Access}, vol.~9, pp. 135\,856--135\,867, 2021.

\bibitem{huang2020survey}
Z.~Huang, Q.~Wang, Y.~Chen, and X.~Jiang, ``{A survey on machine learning against hardware trojan attacks: Recent advances and challenges},'' \emph{IEEE Access}, vol.~8, pp. 10\,796--10\,826, 2020.

\bibitem{almeida2022ransomware}
F.~Almeida, M.~Imran, J.~Raik, and S.~Pagliarini, ``{Ransomware attack as hardware trojan: a feasibility and demonstration study},'' \emph{IEEE Access}, vol.~10, pp. 44\,827--44\,839, 2022.

\bibitem{yang2020dynamic}
D.~Yang, C.~Gao, and J.~Huang, ``{Dynamic game for strategy selection in hardware Trojan attack and defense},'' \emph{IEEE Access}, vol.~8, pp. 213\,094--213\,103, 2020.

\bibitem{sajeed2017invisible}
S.~Sajeed, C.~Minshull, N.~Jain, and V.~Makarov, ``{Invisible Trojan-horse attack},'' \emph{Scientific reports}, vol.~7, no.~1, p. 8403, 2017.

\bibitem{yang2015trojan}
Y.-G. Yang, S.-J. Sun, and Q.-Q. Zhao, ``Trojan-horse attacks on quantum key distribution with classical bob,'' \emph{Quantum Information Processing}, vol.~14, pp. 681--686, 2015.

\bibitem{Vinay2018}
\BIBentryALTinterwordspacing
S.~E. Vinay and P.~Kok, ``{Extended analysis of the Trojan-horse attack in quantum key distribution},'' \emph{Phys. Rev. A}, vol.~97, p. 042335, Apr 2018. [Online]. Available: \url{https://link.aps.org/doi/10.1103/PhysRevA.97.042335}
\BIBentrySTDinterwordspacing

\bibitem{Borisova2020RiskAO}
\BIBentryALTinterwordspacing
A.~V. Borisova, B.~D. Garmaev, I.~B. Bobrov, S.~Negodyaev, and I.~V. Sinil’shchikov, ``{Risk Analysis of Countermeasures Against the Trojan-Horse Attacks on Quantum Key Distribution Systems in 1260–1650 nm Spectral Range},'' \emph{Optics and Spectroscopy}, vol. 128, pp. 1892 -- 1900, 2020. [Online]. Available: \url{https://api.semanticscholar.org/CorpusID:230659143}
\BIBentrySTDinterwordspacing

\bibitem{Pan2020}
\BIBentryALTinterwordspacing
Y.~Pan, L.~Zhang, and D.~Huang, ``{Practical Security Bounds against Trojan Horse Attacks in Continuous-Variable Quantum Key Distribution},'' \emph{Applied Sciences}, vol.~10, no.~21, 2020. [Online]. Available: \url{https://www.mdpi.com/2076-3417/10/21/7788}
\BIBentrySTDinterwordspacing

\bibitem{Navarrete_2022}
\BIBentryALTinterwordspacing
Álvaro Navarrete and M.~Curty, ``Improved finite-key security analysis of quantum key distribution against trojan-horse attacks,'' \emph{Quantum Science and Technology}, vol.~7, no.~3, p. 035021, jun 2022. [Online]. Available: \url{https://dx.doi.org/10.1088/2058-9565/ac74dc}
\BIBentrySTDinterwordspacing

\bibitem{Sharma2021}
P.~Sharma, A.~Agrawal, V.~Bhatia, S.~Prakash, and A.~K. Mishra, ``{Quantum Key Distribution Secured Optical Networks: A Survey},'' \emph{IEEE Open Journal of the Communications Society}, vol.~2, pp. 2049--2083, 2021.

\bibitem{cao2022evolution}
Y.~Cao, Y.~Zhao, Q.~Wang, J.~Zhang, S.~X. Ng, and L.~Hanzo, ``{The evolution of quantum key distribution networks: On the road to the qinternet},'' \emph{IEEE Communications Surveys \& Tutorials}, vol.~24, no.~2, pp. 839--894, 2022.

\bibitem{Devroye1996}
\BIBentryALTinterwordspacing
L.~Devroye, L.~Gy{\"o}rfi, and G.~Lugosi, \emph{A Probabilistic Theory of Pattern Recognition: The Bayes Error}.\hskip 1em plus 0.5em minus 0.4em\relax New York, NY: Springer New York, 1996. [Online]. Available: \url{https://doi.org/10.1007/978-1-4612-0711-5_2}
\BIBentrySTDinterwordspacing

\bibitem{ANBARASI2017}
\BIBentryALTinterwordspacing
K.~Anbarasi, C.~Hemanth, and R.~Sangeetha, ``A review on channel models in free space optical communication systems,'' \emph{Optics and Laser Technology}, vol.~97, pp. 161--171, 2017. [Online]. Available: \url{https://www.sciencedirect.com/science/article/pii/S0030399216315006}
\BIBentrySTDinterwordspacing

\bibitem{Etxezar2021}
\BIBentryALTinterwordspacing
J.~E. Martinez, P.~Fuentes, P.~M. Crespo, and J.~Garcia-Fr{\'i}as, ``{Time-varying quantum channel models for superconducting qubits},'' \emph{npj Quantum Information}, vol.~7, pp. 1--10, 2021. [Online]. Available: \url{https://api.semanticscholar.org/CorpusID:236096518}
\BIBentrySTDinterwordspacing

\bibitem{huang2018quantum}
A.~Huang, ``Quantum hacking in the age of measurement-device-independent quantum cryptography,'' 2018.

\bibitem{achab2020ranking}
A.~Mastane, ``Ranking and risk-aware reinforcement learning,'' Ph.D. dissertation, Institut polytechnique de Paris, 2020.

\bibitem{Boucheron2010}
\BIBentryALTinterwordspacing
B.~Stéphane, B.~Olivier, and L.~Gábor, ``Theory of classification: a survey of some recent advances,'' \emph{ESAIM: Probability and Statistics}, vol.~9, pp. 323--375, 3 2010. [Online]. Available: \url{http://eudml.org/doc/104340}
\BIBentrySTDinterwordspacing

\bibitem{Yan2017}
H.-C. Yan, J.-H. Zhou, and C.~K. Pang, ``{Gaussian Mixture Model Using Semisupervised Learning for Probabilistic Fault Diagnosis Under New Data Categories},'' \emph{IEEE Transactions on Instrumentation and Measurement}, vol.~66, no.~4, pp. 723--733, 2017.

\bibitem{berger2014kolmogorov}
V.~W. Berger and Y.~Zhou, ``{Kolmogorov--smirnov test: Overview},'' \emph{Wiley statsref: Statistics reference online}, 2014.

\bibitem{moscovich2013}
A.~Moscovich-Eiger, B.~Nadler, and C.~Spiegelman, ``{The Calibrated Kolmogorov-Smirnov Test},'' \emph{arXiv preprint arXiv:1311.3190}, vol.~65, pp. 694--706, 2013.

\bibitem{schaffer1993selecting}
C.~Schaffer, ``Selecting a classification method by cross-validation,'' \emph{Machine learning}, vol.~13, pp. 135--143, 1993.

\end{thebibliography}
\balance
\end{document}